\documentclass[aps,pra,onecolumn,nobibnotes,superscriptaddress,nofootinbib]{revtex4-2}
\usepackage{graphicx} 
\usepackage{caption,graphics,physics}
\usepackage{subcaption}
\usepackage{amsmath,xcolor,bm}
\usepackage{qcircuit}
\usepackage{mathtools,amsmath,amsthm,amsfonts,amssymb,amscd}
\usepackage{hyperref}
\usepackage[ruled,vlined]{algorithm2e}
\SetAlgorithmName{Protocol}{Protocol}{List of Protocols}
\usepackage{algpseudocode}

\usepackage[capitalize]{cleveref}
\crefname{algorithm}{Protocol}{Protocols}
\hypersetup{
    colorlinks=true,
    linkcolor=red,
    filecolor=magenta,      
    urlcolor=cyan,
}
\usepackage{geometry}

\newtheorem{theorem}{Theorem}

\newtheorem{lemma}{Lemma}

\newtheorem{example}{Example}

\newtheorem{corollary}{Corollary}

\Crefname{appsec}{Appendix}{Appendices}

\begin{document}
\title{Fermionic-Adapted Shadow Tomography for dynamical correlation functions}

\author{Taehee Ko}
\email{kthmomo@kias.re.kr}
\affiliation{School of Computational Sciences, Korea Institute for Advanced Study}

\author{Mancheon Han}
\affiliation{School of Computational Sciences, Korea Institute for Advanced Study}

\author{Hyowon Park}
\affiliation{Materials Science Division, Argonne National Laboratory, Argonne, IL, 60439, USA}
\affiliation{Department of Physics, University of Illinois at Chicago, Chicago, IL, 60607, USA}

\author{Sangkook Choi}
\email{sangkookchoi@kias.re.kr}
\affiliation{School of Computational Sciences, Korea Institute for Advanced Study}

\begin{abstract}
\begin{center}
    \textbf{Abstract}
\end{center}
    Dynamical correlation functions are essential for characterizing the response of  quantum many-body systems to  external perturbation. As their calculation is classically intractable in general,  quantum algorithms are promising in this aspect, but most rely on brute force measurement strategies that evaluate one body observable pair per circuit. In this work, we introduce Fermionic-Adapted Shadow Tomography (FAST) protocols, a new framework for the efficient calculation of multiple dynamical correlation functions. The key idea is to reformulate these functions into forms that are compatible with shadow tomography techniques. The circuits in our protocols require at most two-copy measurements with uncontrolled Hamiltonian simulation. We show that the proposed protocols enhance sample efficiency and/or reduce the number of measurement circuits by one or two orders of magnitude with respect to the system size across a range of scenarios.

\end{abstract}
\maketitle

\section*{Introduction}
The simulation of quantum many-body systems is considered as one of the most promising areas for realizing quantum advantage \cite{lloyd1996universal,aspuru2005simulated,huang2020predicting,bauer2020quantum}. Among various applications, of critical interest in computational physics is the estimation of dynamical correlation functions. Since they formulate physical properties for measured outcomes from experiments \cite{mahan2013many},  developing efficient simulation tools for  the correlation functions is of practical importance for understanding the physics behind systems in experiments. Besides, correlation functions are the building-block for the functional approaches to the quantum many-body theory \cite{Baym_Kadanoff-ConservationLaws-Phys.Rev.-1961,Luttinger_Ward-GroundStateEnergy-Phys.Rev.-1960,Potthoff_Potthoff-SelfenergyfunctionalApproach-Eur.Phys.J.B-2003}.


For the task, numerous quantum algorithms have been proposed recently. Roughly, the algorithms are classified into variational and sampling-based approaches. Variational approaches optimize parameterized circuits to approximate the correlation functions, which is suitable for near-term quantum computers  \cite{endo2020calculation,kanasugi2023computation,mootz2024adaptive,gomes2023computing,chen2021variational}. Alternatively, sampling-based approaches enable the estimation of correlation functions using guaranteed numerical techniques. These include the Hadamard test combined with the Hamiltonian simulation \cite{kokcu2024linear,bauer2016hybrid,kosugi2020linear,chiesa2019quantum,han2025quantum} or the block encoding \cite{tong2021fast}. In these algorithms,  a correlation function is estimated at a given time or frequency point.

Notably, measurements with sophisticated circuit constructions or techniques can facilitate the computation of correlation functions. For example, computing the functions at multiple time points is performed with a single circuit \cite{piccinelli2025efficient,huggins2022nearly,wang2024snapshotting}, and the methods in   \cite{king2025triply,irmejs2025approximating} apply shadow tomography techniques for computing multiple correlation functions at a time point. A key insight from these approaches is that building clever measurement strategies can lead to efficient and novel algorithms, an idea already explored in a variety of applications such as efficient measurement schemes for variational quantum algorithms \cite{izmaylov2019unitary,hamamura2020efficient,gokhale2020optimization,zhao2020measurement,zhu2024optimizing,bonet2020nearly,cerezo2021cost}, measurement-based quantum computing algorithms \cite{ferguson2021measurement,kanno2025efficient,ding2024simulating,baumer2024quantum,chowdhury2025first}, and shadow tomography techniques for learning tasks \cite{zhao2021fermionic,cho2025entanglement,huang2020predicting,hu2025demonstration,bertoni2024shallow,king2025triply}.

Particularly in condensed matter theory and quantum chemistry, estimating multiple dynamical correlation functions is an essential task, as the number of those often scales polynomially with system size. Examples include the retarded Green's function and the density-density correlation, for which the number of correlation functions scales quadratically and quartically, respectively \cite{mahan2013many}. Under a brute-force measurement strategy, these scalings translate directly into the overall sample complexity and the number of measurement circuits. Recently, several methods have been proposed to simultaneously estimate dynamical correlation functions, such as Green's function estimation using shadow tomography techniques \cite{king2025triply, irmejs2025approximating} and more general functions using a phase kickback technique \cite{huggins2022nearly}. Still, a sample-efficient and practically viable framework for simultaneously estimating many dynamical correlation functions remains lacking. 


In this work, we introduce a new framework for efficiently estimating the dynamical correlation functions via shadow tomography techniques. We first reformulate the correlation functions into direct measurement forms that enable to incorporate shadow tomography techniques.  We refer to our protocols as Fermionic-Adapted Shadow Tomography (FAST). The FAST is classified into the commutator and anti-commutator cases. For the commutator correlation functions, the FAST achieves at least order one of reduction with respect to the number of fermionic modes $n$ in terms of sample complexity and the number of circuits to execute.   In the anti-commutator case, in the regime where $n\geq\frac{1}{\epsilon^2}$ for a given precision $\epsilon$, the FAST achieves improvement compared to the brute force measurement in terms of sample complexity and the number of circuits to execute by an order of magnitude.  When $n\leq \frac{1}{\epsilon^2}$, the FAST can reduce the number of circuits an order of magnitude but does not improve sample complexity.

Throughout this work, we do not consider various noises in quantum hardware, which is beyond the scope of this work. Accordingly, we make the assumption that the circuits in \cref{fig: circuits}
are implemented on a fault-tolerant quantum computer. Additionally, a problem-dependent state $\rho$ is provided, along with uncontrolled Hamiltonian simulation. The main operations in the circuits \cref{fig: circuits} involve the preparation of a problem-dependent quantum state \cite{feniou2024sparse,tang2021qubit,dallaire2016quantum,larsen2024feedback,consiglio2024variational,xie2022variational,nigmatullin2024experimental} (e.g. the ground state of system), the compilation of uncontrolled Hamiltonian simulations  \cite{jin2025partially,poulin2014trotter,ikeda2024measuring,dallaire2016quantum,zhuk2024trotter,clinton2021hamiltonian}, and the dynamical circuit strategy in \cite{kanno2025efficient}. The first two operations are among the most actively studied areas in quantum computing. The dynamical circuit strategy in \cite{kanno2025efficient} allows a single circuit to conduct the random Pauli measurement \cite{huang2020predicting}. Recent advances in quantum hardware are making the implementation of dynamic circuits progressively more practical \cite{corcoles2021exploiting,andersen2019entanglement,baumer2024efficient}. 

Under this assumption, we address a central question: can we develop more efficient protocols for estimating the dynamical correlation functions than the brute force strategy of measuring one-body observable pairs at a time, specifically in terms of sample complexity and/or the number of measurement circuits? We answer this by proposing protocols tailored to each of the representative fermion-to-qubit mappings: Jordan-Wigner (JW), Bravyi-Kitaev (BK) \cite{bravyi2002fermionic}, and ternary tree (TT) \cite{jiang2020optimal}.

\begin{figure}[htbp]

    \centering
    \begin{subfigure}[b]{0.48\textwidth}
    \centering
    \text{
\Qcircuit @C=0.9em @R=1.4em {
& \lstick{\rho}  & \gate{\exp(i\frac{\pi}{4}B)} & \gate{e^{-iHt}} & \gate{U} & \meter & \qw   }} 
\caption{Circuit for single-copy measurements without ancilla qubit}\label{fig: circuit a}
    \end{subfigure}
    \begin{subfigure}[b]{0.48\textwidth}
    \centering
    \text{
\Qcircuit @C=0.9em @R=1.4em {
& \lstick{\ket{0}^n} & \gate{G}    & \gate{\text{Initialize }\rho}  & \gate{\exp(i\frac{\pi}{4}B)} & \gate{e^{-iHt}} & \gate{F} & \meter & \qw   \\
& \cw & \cw \cwx[-1] & \cw  & \cw & \cw & \cw  \cwx[-1] & \cw }} 
\caption{Dynamic circuit for single-copy measurements without ancilla qubit}\label{fig: circuit b}
    \end{subfigure}

    \centering
    \begin{subfigure}[b]{0.48\textwidth}
    \centering
    \text{
\Qcircuit @C=0.9em @R=1.4em {
& \lstick{\ket{0}}  & \gate{\text{H}}  &\ctrl{1}  &\gate{\text{H}} \barrier[1.35em]{1} &  \meter & \qw  \\
& \lstick{\rho} & \qw & \gate{B} & \gate{e^{-iHt}} & \qw & \qw   }} 
\caption{Circuit for single-copy measurements with an ancilla qubit}\label{fig: circuit c}
    \end{subfigure}
    \begin{subfigure}[b]{0.48\textwidth}
    \centering
    \text{
\Qcircuit @C=0.9em @R=1.4em {
& \lstick{\ket{0}} & \qw  & \gate{\text{H}}  &\ctrl{1}  &\gate{\text{H}} \barrier[1.35em]{1} &  \meter & \qw  \\
& \lstick{\ket{0}^n} & \gate{G}   & \gate{\text{Initialize }\rho} & \gate{B} & \gate{e^{-iHt}} & \gate{F} & \meter & \qw   \\
& \cw & \cw \cwx[-1] & \cw & \cw  & \cw & \cw  \cwx[-1] & \cw \\}} 
\caption{Dynamic circuit for single-copy measurements with an ancilla qubit}\label{fig: circuit d}
    \end{subfigure}

\begin{subfigure}[b]{0.48\textwidth}
\text{
\Qcircuit @C=0.9em @R=1.4em {
& \lstick{\rho} & \qw & \gate{\exp(i\frac{\pi}{4}B)} & \gate{e^{-iHt}}  & \qw & \multigate{1}{\text{Bell}} & \qw & \meter  & \qw \\
& \lstick{\rho} & \qw & \gate{\exp(i\frac{\pi}{4}B)} & \gate{e^{-iHt}}  & \qw & \ghost{\text{Bell}} & \qw & \meter  & \qw }}     
\caption{Circuit for two-copy measurements without ancilla qubit}\label{fig: circuit e}
\end{subfigure}
\begin{subfigure}[b]{0.48\textwidth}
\text{
\Qcircuit @C=0.9em @R=1.4em {
& \lstick{\ket{0}}  & \gate{\text{H}}  &\ctrl{1}  &\gate{\text{H}} \barrier[1.35em]{1} &  \meter & \qw  \\
& \lstick{\rho} & \qw & \gate{B} & \gate{e^{-iHt}} & \qw & \qw & \multigate{2}{\text{Bell}} & \qw & \meter  & \qw \\
& \lstick{\ket{0}}  & \gate{\text{H}}  &\ctrl{1}  &\gate{\text{H}} \barrier[1.35em]{1} &  \meter & \qw  \\
& \lstick{\rho} & \qw & \gate{B} & \gate{e^{-iHt}} & \qw & \qw & \ghost{\text{Bell}} & \qw & \meter  & \qw }}     
\caption{Circuit for two-copy measurements with two ancilla qubits}\label{fig: circuit f}
\end{subfigure}

    \caption{Circuits for single-copy and two-copy measurements used in our protocols for estimating the dynamical correlation functions. The gate $B$ denotes a given Pauli string.   In \cref{fig: circuit a},  a set of Clifford unitaries for mutually commuting observables, denoted as $U$, is used.  In \cref{fig: circuit b}, \cref{fig: circuit d}, we use the dynamic circuit technique for random Pauli measurement \cite{kanno2025efficient}. Following the same notation in \cite{kanno2025efficient}, we denote by $G,F$, the operations for generating the probability distribution for which Pauli basis is sampled, and receiving the feedback for constructing the corresponding Pauli basis. In \cref{fig: circuit e}, we perform the Bell sampling to throw out observables of negligible expectation magnitudes \cite{huang2021information,king2025triply}. In \cref{fig: circuit f}, we also apply Bell sampling in a certain case, while in the other, we use a variant that requires measurement in a problem-dependent basis, as will be shown in \cref{lem: ours}.    }\label{fig: circuits}
\end{figure}

\section*{Results}

\subsection*{Related work}

Several works have demonstrated the estimation of Green's function using Hadamard-test circuits  \cite{endo2020calculation,kanasugi2023computation,mootz2024adaptive,chen2021variational}, based on strategies of brute-force measurements. These methods require two controlled Pauli operators in estimating real-time Green's functions, together with a time evolution block or a corresponding parameterized circuit block. Since the two Pauli operators arise in the expansion of Green's function, the number of measurement circuits and the total sample complexity scales as $\mathcal{O}(n^2)$ and $\mathcal{O}(\frac{n^2\log n}{\epsilon^2})$, respectively, where $n$ denotes the system size.  The recent works  \cite{king2025triply,irmejs2025approximating} propose methods for estimating Green's functions using shadow tomography techniques. While their target calculations are slightly different (i.e. \cite{king2025triply} for a fixed time and \cite{irmejs2025approximating} for a fixed frequency), the common idea is that  reformulating  correlation functions enables the use of shadow tomography techniques and the savings of measurement cost. The algorithm in \cite{irmejs2025approximating} is based on a continued fraction of the Green's function and outputs single and multiple Green's functions. In their approach, the number of observables scales exponentially with the level of continued fraction, so efficient measurements are considered such as classical shadow and fermionic shadow. On the other hand, the approach in \cite{king2025triply} uses the Taylor expansion of Green's function  and requires to compute derivatives at time zero. Their approach requires at least $\mathcal{O}(n^3)$ time because of the entangled measurement constructions \cite{aaronson2004improved}.  Especially for typical Hamiltonian models where the number of Majorana operators scales linearly with the system size, such as the Hubbard model, the time complexity is expected to scale as $\Omega(n^3)$ due to the processing time to estimate nested commutator terms, and the number of measurement circuits may scale as $\Omega(\frac{n}{\epsilon^2})$ since at least one-body observables have to be estimated in the process. In contrast to these methods, our approach considers both regimes: $n\leq\frac{1}{\epsilon^2}$ and $n\ge \frac{1}{\epsilon^2}$. In both cases, it can reduce the number of circuits by an order of magnitude and improves sample complexity, especially in the regime $n\ge\frac{1}{\epsilon^2}$. Additionally, our approach maintains a post-processing time comparable to that of brute-force methods, and it can estimate Green's functions across a similar range of time points using identical time-evolution blocks.  \cref{table: comparison 1} details the complexity measures across brute-force measurements \cite{endo2020calculation,kanasugi2023computation,mootz2024adaptive,chen2021variational}, the shadow method in \cite{king2025triply}, and our proposed approach.

Compared to the case of Green's function, fewer results exist for estimating bosonic correlation functions via quantum computation \cite{mootz2024adaptive,kosugi2020linear,kokcu2024linear}. With  explicit circuit constructions, \cite{kosugi2020linear} proposes an  approach for estimating response functions in the frequency domain, but the computational complexity of their approach scales exponentially with system size, due to calculations for all eigenvalues. More closely related to our work are the time-domain approaches \cite{mootz2024adaptive,kokcu2024linear}. The method in \cite{kokcu2024linear} provides an ancilla-free implementation using time-dependent Hamiltonian simulations within a linear response framework, whereas \cite{mootz2024adaptive} employs Hadamard-test circuits with multiple controlled operations to estimate general susceptibilities. Generally, these approaches incur an $\mathcal{O}(n^4\log n)$ sample complexity and a similarly scaling number of measurement circuits. By leveraging shadow tomography, our approach reduces this quartic scaling to cubic or better, depending on the regime. Moreover, similar to \cite{kokcu2024linear}, our method avoids controlled Pauli operations entirely. For all three time-domain approaches, $\mathcal{O}(n^4)$ correlation functions are evaluated from measurement outcomes, yielding an $\mathcal{O}(n^4)$ post-processing time. \cref{table: comparison 2} summarizes this comparison between our method and those in \cite{kokcu2024linear,mootz2024adaptive}.

Our technical contributions are summarized as follows:

\begin{itemize}
    \item  We derive identities \eqref{eq: bose rep}, \eqref{eq: plus rep}, and \eqref{eq: minus rep}, which establish a general framework for estimating both commutator and anti-commutator dynamical correlation functions. Our approach is naturally compatible with various shadow tomography techniques, without controlled time evolution.
    
    \item (Commutator case) By integrating measurement techniques in \cite{bonet2020nearly} and \cite{king2025triply} into our framework, we achieve a significant reduction in both sample complexity and the number of measurement circuits. For the regime $n \le \frac{1}{\epsilon^2}$ using TT mapping, we introduce a new measurement strategy (\Cref{lem: ternary}) combining classical shadows \cite{huang2020predicting} and dynamic circuits \cite{kanno2025efficient}. This approach reduces circuit construction time and measurement overhead by at least an order of magnitude relative to system size compared to previous results \cite{kokcu2024linear, mootz2024adaptive}.

    \item (Anti-commutator case) We provide tailored strategies for the anti-commutator based on the mapping and regime. In the regime $n\le\frac{1}{\epsilon^2}$, our protocol reduces only the number of circuits by an order of magnitude with respect to system size.  In the other regime $n\ge\frac{1}{\epsilon^2}$, we introduce two measurement strategies. The first is optimized for BK and TT mappings (\cref{lem: ours0}), while the second is designed for JW mapping (\cref{lem: ours}). For the JW case, we develop a novel two-copy measurement strategy that reduces the required number of circuits by a factor of $1/\epsilon^2$ relative to our first strategy.
\end{itemize}

\begin{table}[h]
\centering
\begin{tabular}{|c|c|c|c|c|}
\hline
Method & \cite{endo2020calculation,kanasugi2023computation,mootz2024adaptive,chen2021variational}   & \cite{king2025triply} & \multicolumn{2}{c|}{This work}\\ 
\hline
Measurement type & Brute-force & Shadow-based & \multicolumn{2}{c|}{Shadow-based} \\
\hline 
Target calculation & $\{G_{ab}(t)\}_{a,b=1}^n$ & $\{G_{ab}^{(q)}(0)\}_{a,b=1}^{n}$ & \multicolumn{2}{c|}{$\{G_{ab}(t)\}_{a,b=1}^n$}\\
\hline
Regime & $n\ge 1$ & $n\ge \frac{1}{\epsilon^2},\; t\approx 0$ & $n\le \frac{1}{\epsilon^2}$ & $n\ge \frac{1}{\epsilon^2}$ \\ 
\hline
Sample complexity\footnote{We consider typical Hamiltonian models such as the Hubbard model, where  $s=\mathcal{O}(n)$. This affects the sparsity parameter in \cite[Definition B.2]{king2025triply}.}  & $\mathcal{O}(\frac{n^2\log n}{\epsilon^2})$ & $\mathcal{O}(\frac{(nq)^{5q}\log n}{\epsilon^4})$ & $\mathcal{O}(\frac{n^2\log n}{\epsilon^2})$ &  $\mathcal{O}(\frac{n\log n}{\epsilon^4})$ \\
\hline 
Ancilla qubit & $1$ & $0$ & $1$  & $2$ \\
\hline
the number of copies of $\rho$ & $1$ & $2$ & $1$ & $2$ \\
\hline
the number of circuits & $\mathcal{O}(n^2)$ & $\Omega(\frac{n}{\epsilon^2})$ & $\mathcal{O}(n^2)$ (JW,BK) or $\mathcal{O}(n)$ (TT) & $\mathcal{O}(n)$ \\
\hline
Measurement circuit depth & $\mathcal{O}(1)$ & $\mathcal{O}(n)$ &  $\mathcal{O}(1)$ & $\mathcal{O}(1)$\\
\hline 
Controlled Pauli operation  & $2$    & $0$ & $1$  & $2$   \\
\hline
Time evolution block\footnote{Parameterized circuit blocks in \cite{endo2020calculation,kanasugi2023computation,mootz2024adaptive,chen2021variational} are considered as time evolution blocks} & $1$ & $0$ &  $1$ &$2$\\
\hline
Classical post-processing & $\mathcal{O}(n^2)$ & $\Omega(n^3)$ & $\mathcal{O}(n^2)$ & $\mathcal{O}(n^2)$\\
\hline
\end{tabular}
\caption{Comparison between the methods in \cite{endo2020calculation,kanasugi2023computation,mootz2024adaptive,chen2021variational}, the method in \cite[Theorem B.1]{king2025triply}, and our method, towards the calculation of the Green's function $G_{ab}(t)=-i\tr(\rho\{c_i(t),c_j^\dagger\})$. As a subroutine, the approach in \cite{king2025triply} estimates the $q$-th order derivative of Green's function.   "Controlled Pauli operation" indicates the number of controlled Pauli operations that apply a Pauli string, $P$, on the target register. "Time evolution block" corresponds to a query of the Hamiltonian simulation. "Classical post-processing" means the time spent on thresholding small measurement values and evaluating the $\mathcal{O}(n^2)$ correlation functions from measurement values.  }
\label{table: comparison 1}
\end{table}

\begin{table}[h]
\centering
\begin{tabular}{|c|c|c|c|c|}
\hline
Method & \cite{mootz2024adaptive} & \cite{kokcu2024linear} & \multicolumn{2}{c|}{This work}\\ 
\hline
Measurement type & Brute-force & Brute-force & \multicolumn{2}{c|}{Shadow-based} \\
\hline 
Target calculation &  \multicolumn{4}{c|}{$\{F_{ijkl}(t)\}_{i,j,k,l=1}^n$}\\
\hline
Regime & $n\ge 1$ & $n\ge 1$ & $n\le \frac{1}{\epsilon^2}$ & $n\ge \frac{1}{\epsilon^2}$  \\
\hline
Sample complexity  & $\mathcal{O}(\frac{n^4\log n}{\epsilon^2})$ & $\mathcal{O}(\frac{n^4\log n}{\epsilon^2})$ & $\mathcal{O}(\frac{n^3\log n}{\epsilon^2})$ & $\mathcal{O}(\frac{n^2\log n}{\epsilon^4})$ \\
\hline
Ancilla qubit & $1$ & $0$ & $0$  & $0$ \\
\hline
the number of copies of $\rho$ & $1$ & $1$ & $1$ & $2$ \\
\hline
the number of circuits & $\mathcal{O}(n^4)$ & $\mathcal{O}(n^4)$ & $\mathcal{O}(n^3)$ (JW,BK) or $\mathcal{O}(n^2)$ (TT) & $\mathcal{O}(\frac{n^2}{\epsilon^2})$ \\
\hline
Measurement circuit depth & $\mathcal{O}(1)$ & $\mathcal{O}(1)$ & $\mathcal{O}(n)$ (JW,BK) or $\mathcal{O}(1)$ (TT) &  $\mathcal{O}(n)$ \\
\hline 
Controlled Pauli operation\footnote{We let $P_4,P_5$ be the identities in \cite[Figure 2(a)]{mootz2024adaptive} } & $4$ & $0$ & $0$ & $0$   \\
\hline 
Time evolution block\footnote{We plug $t=0$ in \cite[Figure 2(a)]{mootz2024adaptive} for the comparison} & $1$ & $1$ & $1$ & $2$  \\
\hline
Classical post-processing & $\mathcal{O}(n^4)$ & $\mathcal{O}(n^4)$ & $\mathcal{O}(n^4)$ & $\mathcal{O}(n^4)$ \\
\hline
\end{tabular}
\caption{Comparison between the method in \cite{mootz2024adaptive} and our method for the calculation of the correlation function $F_{ijkl}(t)=\tr(\rho [e^{iHt}c_i^\dagger c_je^{-iHt},c_k^\dagger c_l])$. We use the same metrics as in \cref{table: comparison 1}.
 }
\label{table: comparison 2}
\end{table}

\begin{table}[t]
\centering
\renewcommand{\arraystretch}{1.25}
\begin{tabular}{ll}
\hline
\textbf{Notation} & \textbf{Description} \\
\hline

$c_i^\dagger,\, c_i$ 
& Fermionic creation and annihilation operators on mode $i\in[n]$ \\

$\{c_i,c_j^\dagger\}=\delta_{ij}$ 
& Canonical anticommutation relations \\

$\{c_i,c_j\}=0$ 
& Fermionic annihilation anticommutation relation \\

$O(t) := e^{iHt} O e^{-iHt}$ 
& Heisenberg time evolution of operator $O$ \\

$\gamma_{2i-1} = c_i + c_i^\dagger$ 
& Majorana operator (real component) \\

$\gamma_{2i} = i(c_i - c_i^\dagger)$ 
& Majorana operator (imaginary component) \\

$\{\gamma_a,\gamma_b\} = 2\delta_{ab}$ 
& Majorana anticommutation relations \\

$B$ 
& Operator represented as Pauli strings or sums of Pauli strings (context-dependent) \\

$P = \bigotimes_{k=1}^n \sigma_k^{\alpha_k}$ 
& $n$-qubit Pauli string \\

$\sigma^\alpha \in \{I,X,Y,Z\}$ 
& Single-qubit Pauli operators \\

\hline
\end{tabular}
\caption{Summary of notation used throughout the paper.}
\label{tab:notation}
\end{table}

\subsection*{Reformulations of correlation functions}

We consider two types of dynamical correlation functions:
\begin{equation}\label{eq: function rep}
    C(A,B,t) = \tr(\rho[A(t),B]) \text{ or } \tr(\rho\{A(t),B\}).
\end{equation}Here, $A$ and $B$ can be represented by Pauli strings under a fermion-to-qubit mapping, corresponding to fermionic or Majorana operators. These lattice correlation functions form a building block for the real-space correlation function:
\begin{align}
    C(\bm r, \bm r',t)=\sum_{A,B}\Phi_A(\bm r)\Phi_B(\bm r')C(A,B,t).
\end{align}Here, $\Phi_A(\bm r)$ denotes the mode wavefunction spanning in the Hilbert space for the operator $A$.

\begin{example}[Green's functions]
    The retarded Green's function is defined as
\begin{equation}
G^{R}_{ij}(t) = -i \theta(t) \tr(\rho \{ c_i(t), c_j^\dagger(0) \} ),
\end{equation}
where \( c_i(t) = e^{iHt} c_i e^{-iHt} \) is the fermionic annihilation operator in the Heisenberg picture. This lattice Green's function is used to calculate real space Green's function:
\begin{align}
    G^R(\bm r, \bm r',t)=\sum_{i,j}\varphi_i(\bm r)\varphi_j^*(\bm r')G_{ij}^R(t),
\end{align}where $\varphi_i(\bm r)$ is the wavefunction for the mode index $i$. This real space Green's function is central to the computation of spectral functions and local density of states via the Fourier transform.
\end{example}

\begin{example}[Density-Density Response]
    
Let \( n_i = c_i^\dagger c_j \) be the density matrix operators. Then, the lattice density-density correlation function is 
\begin{align}
    \chi_{ij;kl}(t) = -i\theta(t)\tr\left(\rho [n_{ij}(t),n_{kl}(0)]\right),
\end{align}which is used to compute the real space density-density response,
\begin{align}\label{eq: ddres}
    \chi(\bm r, \bm r',t)=\sum_{ij;kl}\varphi_i^*(\bm r)\varphi_j(\bm r)\varphi_k^*(\bm r')\varphi_l(\bm r')\chi_{ij;kl}(t).
\end{align}This function measures how a density perturbation at $\bm r'$ affects the density at $\bm r$ at a later time.
\end{example}

 We notice that the correlation function of the form \eqref{eq: function rep} is decomposed into 
\begin{align}\label{eq: pauli rep}
    \tr(\rho[A(t),B])=\sum_{m=1}^Mb_m \tr(\rho[A(t),B_m])\quad \text{    or  }\quad 
    \tr(\rho\{A(t),B\})=\sum_{m=1}^Mb_m \tr(\rho\{A(t),B_m\}),
\end{align}if $B=\sum_{m=1}^Mb_m B_m$. For our applications, $B_m$ denotes a Pauli string. When $B$ is a $1$-body fermionic observable or a creation operator (e.g. the examples below \eqref{eq: function rep}), $M=\mathcal{O}(1)$ and $b_m=\mathcal{O}(1)$ for each $m$, under a fermion-to-qubit mapping \cite{endo2020calculation,kanasugi2023computation}.  

From this observation, 
we present reformulations of the correlation functions in~\eqref{eq: function rep}, focusing on the case where $B$ is a Pauli string. The general case where $B=\sum_{m=1}^Mb_m B_m$ can be straightforwardly resolved by repeating those reformulations $M$ times and summing them up.

In the commutator case, we reformulate the representation of the correlation functions as,
\begin{equation}
    \begin{split}
        C(A,B,t)&=\tr(\rho[A(t),B]) \\
        & = -i\left[4\tr(\left(\frac{I+iB^\dagger }{2}\right)\rho\left(\frac{I-iB}{2}\right)A(t))-\tr(\rho A(t))-\tr(B^\dagger\rho BA(t))\right].
    \end{split}
\end{equation}If $B$ is a Pauli string, we can simplify this formulation to
\begin{equation}\label{eq: bose rep}
\begin{split}
    &C(A,B,t) = -i\left[2\tr(e^{-iHt}e^{i\frac{\pi}{4}B}\rho e^{-i\frac{\pi}{4}B}e^{iHt}A)-\tr(e^{-iHt}\rho e^{iHt}A)-\tr(e^{-iHt}B\rho Be^{iHt}A)\right]
\end{split}
\end{equation}
Therefore, this representation allows us to estimate the commutator correlation function with three types of direct measurements and three circuits: 1. measure observable $A$ with state 
$e^{-iHt}e^{i\frac{\pi}{4}B}\rho e^{-i\frac{\pi}{4}B}e^{iHt}$, 2. measure observable $A$ with state $e^{-iHt}\rho e^{iHt}$, and 3. measure observable $A$ with state $e^{-iHt}B\rho Be^{iHt}$.

However, the anti-commutator case is not straightforward as the commutator case. To do similarly, we require the operation $\frac{I+B}{2}$ or $\frac{I-B}{2}$ as shown below but these are non-unitary operations that are not directly implementable on quantum computers. In this case, we therefore introduce a measurement-based strategy, which leads to two representations of the anti-commutator correlation function. By a bit of algebra, we achieve that
\begin{equation}
    \begin{split}
        C(A,B,t)&=\tr(\rho\{A(t),B\}) \\
        & = 4\tr(\left(\frac{I+B}{2}\right)\rho\left(\frac{I+B}{2}\right)A(t))-\tr(\rho A(t))-\tr(B\rho B A(t)),
    \end{split}
\end{equation}and
\begin{equation}
    \begin{split}
        C(A,B,t)&=\tr(\rho\{A(t),B\}) \\
        & = -4\tr(\left(\frac{I-B}{2}\right)\rho\left(\frac{I-B}{2}\right)A(t))+\tr(\rho A(t))+\tr(B\rho B A(t)).
    \end{split}
\end{equation}Equivalently, we observe that
\begin{equation}\label{eq: plus rep}
    \begin{split}
        C(A,B,t)& = 4C_{+}^2\tr(\rho_{+}A)-\tr(e^{-iHt}\rho e^{iHt}A)-\tr(e^{-iHt}B\rho Be^{iHt}A)\\
        &C_{+}^2=\tr(\left(\frac{I+B}{2}\right)\rho \left(\frac{I+B}{2}\right)),\; \rho_{+}= \frac{e^{-iHt}\left(\frac{I+B}{2}\right)\rho \left(\frac{I+B}{2}\right) e^{iHt}}{C_{+}},
    \end{split}
\end{equation}and
\begin{equation}\label{eq: minus rep}
    \begin{split}
        C(A,B,t)& = -4C_{-}^2\tr(\rho_{-}A)+\tr(e^{-iHt}\rho e^{iHt}A)+\tr(e^{-iHt}B\rho Be^{iHt}A)\\
        & C_{-}^2=\tr(\left(\frac{I-B}{2}\right)\rho \left(\frac{I-B}{2}\right)),\; \rho_{-}= \frac{e^{-iHt}\left(\frac{I-B}{2}\right)\rho \left(\frac{I-B}{2}\right) e^{iHt}}{C_{-}}.
    \end{split}
\end{equation}Note that $C_{+}^2+C_{-}^2=1$. These quantities are exactly the probabilities of the output state to be $\rho_{\pm }$, which are obtainable from the circuits in \cref{fig: circuit c}, \cref{fig: circuit d}, and \cref{fig: circuit f} with a controlled Pauli gate for $B$. Technically, we check this in \eqref{eq: fig1c}.

We notice that the output states $\rho_{\pm}$ from \cref{fig: circuit c}, \cref{fig: circuit d}, or \cref{fig: circuit f} are prepared probabilistically. This implies that we cannot deterministically use the representations in \eqref{eq: plus rep} and \eqref{eq: minus rep} as the commutator case. Instead, we suggest to follow a majority rule for selecting  one of the representations based on resulting statistics. Simply, for a scheduled number of circuit samplings, we count the outcomes in the ancilla. If the state $\ket{0}$ is observed more frequently than $\ket{1}$, we know that $C_{+}^2\geq C_{-}^2$ with high probability and choose the \emph{plus} representation \eqref{eq: plus rep}. In the opposite case, we choose the \emph{minus} representation \eqref{eq: minus rep}. More importantly, we can make use of the output states at least half of the scheduled number of samplings. We state this formally in the following lemma.
\begin{lemma}\label{lem: majority rule}
    Given scheduled $N_s=\mathcal{O}(\frac{\log\frac{1}{\delta}}{\epsilon^2})$ circuit samplings, let $n_{+}$ and $n_{-}$ be the number of outcomes being $\ket{0}$ and $\ket{1}$ in the ancilla qubit, respectively. Then, the following holds with probability at least $1-\delta$,
    \begin{equation}
        \abs{C_{+}^2-\frac{n_{+}}{N_s}}, \abs{C_{-}^2-\frac{n_{-}}{N_s}}\leq \epsilon,
    \end{equation}and either $\rho_{+}$ or $\rho_{-}$ is prepared with at least $\frac{N_s}{2}$ times. Consequently, in either of the fermionic correlation functions in \eqref{eq: plus rep} and \eqref{eq: minus rep}, the first term can be estimated with at least $\frac{N_s}{2}$ samples. 
\end{lemma}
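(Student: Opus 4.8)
The plan is to treat the ancilla readout as a sequence of i.i.d.\ Bernoulli trials and invoke a Hoeffding bound; the second half of the statement is then a deterministic pigeonhole observation requiring no union bound.

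First I would recall from \cref{sec: Appendix B} that a single execution of the relevant circuit (\cref{fig: circuit c}, \cref{fig: circuit d}, or \cref{fig: circuit f}) yields ancilla outcome $\ket{0}$ with probability exactly $C_{+}^2$ and outcome $\ket{1}$ with probability exactly $C_{-}^2 = 1 - C_{+}^2$, and that conditioned on these two outcomes the data register is left precisely in $\rho_{+}$ or $\rho_{-}$, respectively. Since the $N_s$ runs are independent and identically prepared, $n_{+}$ is a $\mathrm{Binomial}(N_s, C_{+}^2)$ random variable with mean $N_s C_{+}^2$, and $n_{-} = N_s - n_{+}$. Applying Hoeffding's inequality to the empirical frequency $n_{+}/N_s$ gives $\Pr[\,\abs{n_{+}/N_s - C_{+}^2} \ge \epsilon\,] \le 2 e^{-2 N_s \epsilon^2}$. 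Choosing $N_s \ge \tfrac{1}{2\epsilon^2}\ln\tfrac{2}{\delta}$, i.e.\ $N_s = \mathcal{O}(\log(1/\delta)/\epsilon^2)$, makes the right-hand side at most $\delta$. On the complementary event, using $n_{-} = N_s - n_{+}$ and $C_{-}^2 = 1 - C_{+}^2$ yields $\abs{n_{-}/N_s - C_{-}^2} = \abs{n_{+}/N_s - C_{+}^2} \le \epsilon$ at the same time, which establishes the first displayed inequality.

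Finally, since $n_{+} + n_{-} = N_s$, at least one of $n_{+}, n_{-}$ is $\ge \lceil N_s/2 \rceil \ge N_s/2$; this holds deterministically. The majority rule selects exactly the representation, \eqref{eq: plus rep} or \eqref{eq: minus rep}, whose projected state $\rho_{\pm}$ was produced on that majority of runs, so the corresponding first term $4 C_{\pm}^2 \tr(\rho_{\pm} A)$ admits an estimator assembled from at least $N_s/2$ copies of $\rho_{\pm}$ together with the frequency estimate $n_{\pm}/N_s$ of $C_{\pm}^2$, giving the claimed conclusion. I do not anticipate a genuine obstacle: the only points needing care are (i) confirming the exact Bernoulli structure of the ancilla statistics from the circuit semantics in \cref{sec: Appendix B}, and (ii) noting that the assertion is purely about how many usable copies of $\rho_{\pm}$ are guaranteed, not about whether the rule identifies the genuinely larger of $C_{+}^2, C_{-}^2$, so no additional failure probability enters from the selection step; one may optionally remark that, on the good event, $n_{+} \ge n_{-}$ also forces $C_{+}^2 \ge C_{-}^2 - 2\epsilon$, so the rule is in fact approximately correct.
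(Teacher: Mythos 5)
Your proof is correct and follows essentially the same route as the paper: Hoeffding's inequality applied to the count of $\ket{0}$ outcomes gives the concentration of $n_{\pm}/N_s$ around $C_{\pm}^2$, and the pigeonhole observation $n_{+}+n_{-}=N_s$ deterministically yields at least $N_s/2$ preparations of one of $\rho_{\pm}$. Your additional remarks (the exact Bernoulli structure from the circuit semantics and the note that no extra failure probability enters from the selection step) only make explicit what the paper leaves implicit.
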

\begin{proof}
    The first statement immediately follows from the Hoeffding's inequality by considering the random variable $X$ that counts the number of outcomes to be $\ket{0}$. The second statement is true since $\frac{n_{+}}{N_s}+ \frac{n_{-}}{N_s}=1$ and either of $n_{+}$ or $n_{-}$ must be $\geq \frac{N_s}{2}$, which means that one of the states $\rho_{\pm}$ is more often returned  with at least $\frac{N_s}{2}$ samples.
\end{proof}

\subsection*{Protocol for the commutator case}

We consider the commutator correlation functions that involve $1$-body observables with $i,j,k,l\in[n]$, 
\begin{equation}\label{eq: comm-correlation}
    \tr(\rho [e^{iHt}c_i^\dagger c_je^{-iHt},c_k^\dagger c_l]).
\end{equation} In this section, 
we present a protocol that allows to efficiently estimate these correlation functions. The key idea is to incorporate shadow tomography techniques into the representation \eqref{eq: bose rep} and design  protocols for the dynamical correlation functions \eqref{eq: comm-correlation}.

For a given precision $\epsilon$, we distinguish between two regimes:  $n\leq \frac{1}{\epsilon^{2}}$ and $n\geq \frac{1}{\epsilon^{2}}$. This is motivated by the recent result by King \emph{et al} \cite{king2025triply}. We first consider the case when $n\leq \frac{1}{\epsilon^2}$ and JW or BK mapping is applied. The following lemma is an immediate extension of fermionic partition strategy in \cite{bonet2020nearly} as we prove in the following. 
\begin{lemma}\label{lem: bonet}Assume $n\leq \frac{1}{\epsilon^2}$. Then,
there exists a shadow tomography protocol that is compatible with the Jordan-Wigner (JW) mapping and the Bravyi-Kitaev (BK) mapping, and uses single-copy measurements to estimate the $1$-body fermionic observables to additive error $\epsilon$ with sample complexity,
\begin{equation}
    \mathcal{O}(\frac{n\log n}{\epsilon^2}),
\end{equation}and $\mathcal{O}(n)$ measurement circuits of $\mathcal{O}(n)$ depth.
\end{lemma}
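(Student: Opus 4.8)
Since this lemma merely restates a published result, the plan is to recall the measurement-scheduling construction of Bonet-Monroig \emph{et al.} and check that its guarantees reproduce the stated bounds. The starting point is to express the $1$-body fermionic observables $c_i^\dagger c_j$ --- more precisely their Hermitian combinations $c_i^\dagger c_j + c_j^\dagger c_i$ and $i(c_i^\dagger c_j - c_j^\dagger c_i)$, together with the number operators $n_i$ --- as linear combinations, with $\mathcal{O}(1)$ many terms and $\mathcal{O}(1)$ coefficients, of Majorana bilinears $i\gamma_\mu\gamma_\nu$, where $\gamma_1,\dots,\gamma_{2n}$ are the Majorana operators associated to the chosen fermion-to-qubit encoding. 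There are $\binom{2n}{2}$ such bilinears, each of operator norm $1$, so estimating all of them to additive error $\mathcal{O}(\epsilon)$ suffices to estimate every $1$-body observable to error $\epsilon$.

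The combinatorial heart of the argument is that two Majorana bilinears $\gamma_\mu\gamma_\nu$ and $\gamma_\alpha\gamma_\beta$ supported on disjoint index pairs commute (their product is reordered by an even number of Majorana transpositions). Hence any perfect matching of the complete graph $K_{2n}$ on the $2n$ Majorana indices yields a set of $n$ mutually commuting bilinears, and a proper edge colouring of $K_{2n}$ into $2n-1$ perfect matchings partitions all $\binom{2n}{2}=n(2n-1)$ bilinears into $2n-1=\mathcal{O}(n)$ commuting groups. For each group one then constructs a Clifford (Gaussian / matchgate) circuit that simultaneously diagonalizes the $n$ bilinears in it; under the Jordan-Wigner encoding these are $X/Y$ strings dressed by $Z$-ladders and the diagonalizing circuit can be compiled in $\mathcal{O}(n)$ depth, and one checks the same $\mathcal{O}(n)$-depth bound holds for the Bravyi-Kitaev encoding. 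This delivers the claimed $\mathcal{O}(n)$ measurement circuits of $\mathcal{O}(n)$ depth.

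For the sample complexity, in each of the $\mathcal{O}(n)$ settings one takes $N=\mathcal{O}(\log n/\epsilon^2)$ shots; since each single-setting estimator is an average of outcomes bounded by $1$ in magnitude, Hoeffding's inequality together with a union bound over the $n$ bilinears inside a group and over the $\mathcal{O}(n)$ groups guarantees that all $\binom{2n}{2}$ bilinears, hence all $1$-body observables, are estimated to additive error $\mathcal{O}(\epsilon)$ with high probability. The total cost is $\mathcal{O}(n)\cdot\mathcal{O}(\log n/\epsilon^2)=\mathcal{O}(n\log n/\epsilon^2)$ samples; rescaling $\epsilon$ by a constant yields the statement. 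The hypothesis $n\leq 1/\epsilon^2$ plays no role in the construction itself --- it only delimits the regime in which this single-copy scheme is the preferred one, the complementary regime being handled by the two-copy protocol.

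The one step that requires genuine work, rather than bookkeeping, is showing that the commuting groups admit \emph{shallow}, $\mathcal{O}(n)$-depth, Clifford diagonalizing circuits for both the JW and BK encodings --- i.e. that the edge colouring can be chosen compatibly with the Majorana-string structure so that the change-of-basis is a bounded-depth Gaussian/matchgate circuit rather than a generic $\mathcal{O}(n^2)$-depth Clifford. The remaining ingredients (the disjoint-support commutation fact, the round-robin edge colouring of $K_{2n}$, Hoeffding plus the union bound, and the reduction from $1$-body observables to Majorana bilinears) are routine.
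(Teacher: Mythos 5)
Your proposal is correct and follows essentially the same route as the paper, whose own proof is simply a one-line deferral to the construction in Appendix C of Bonet-Monroig \emph{et al.}: partition the $\binom{2n}{2}$ Majorana bilinears into $2n-1$ commuting groups (a $1$-factorization of $K_{2n}$), measure each group with an $\mathcal{O}(n)$-depth Clifford circuit, and apply Hoeffding plus a union bound to get $\mathcal{O}(n\log n/\epsilon^2)$ samples. Your reconstruction faithfully fills in the details the paper leaves to the citation, including the correct observation that the hypothesis $n\leq 1/\epsilon^2$ only delimits the regime in which this single-copy scheme is preferred.
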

\begin{proof}
    From a fermionic partition strategy \cite{bonet2020nearly}, we can construct $\mathcal{O}(n)$ clifford measurement circuits for measuring products of two Majorana operators, with $\mathcal{O}(n)$ depth. Since 
    \begin{align}
        c_i^\dag c_j = (\gamma_{2i-1}+i\gamma_{2i})(\gamma_{2j-1}-i\gamma_{2j}),
    \end{align}we notice that the $1$-body fermionic observable is formed by a sum of four $2$-Majorana operators, and therefore measuring the $1$-body fermionic observables requires $\mathcal{O}(n)$ circuits with $\mathcal{O}(n)$ depth. 
     
\end{proof}
Now we consider the situation where $n\le\frac{1}{\epsilon^2}$ and TT mapping \cite{jiang2020optimal} is applied. Compared to the BK mapping that achieves $\log_2 n+1$ locality \cite{havlivcek2017operator} as the worst-case, the TT mapping improves the base factor as $\log_3 (2n)$ locality. In turn, this enables estimating the $1$-body observables with comparable sample complexity to the protocol in \cref{lem: bonet} by incorporating it into the random Pauli measurement in \cite{huang2020predicting}. We can further make it practical for real quantum execution by employing the notion of dynamic circuits \cite{kanno2025efficient}, which allows to do the task with a single circuit in principle. We summarize this protocol in the following lemma. 
\begin{lemma}\label{lem: ternary}
    Assume that $n\leq \frac{1}{\epsilon^2}$. Then, there exists a shadow tomography protocol that is compatible with the ternary tree (TT) mapping, and uses single-copy measurements to estimate the $1$-body observables to additive error $\epsilon$ with sample complexity,
\begin{equation}
    \mathcal{O}(\frac{n\log n}{\epsilon^2}),
\end{equation}and $\mathcal{O}(1)$ measurement circuits of $\mathcal{O}(1)$ depth.
\end{lemma}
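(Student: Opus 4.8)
The plan is to reduce \cref{lem: ternary} to the known structural facts about the ternary tree (TT) mapping together with the random-Pauli classical-shadows machinery of \cite{huang2020predicting}, and then invoke the dynamic-circuit construction of \cite{kanno2025efficient} to collapse the many Clifford measurement circuits into one. First I would recall that under the TT mapping each Majorana operator $\gamma_\mu$ is sent to a Pauli string of weight at most $\lceil\log_3(2n+1)\rceil$, and hence each $1$-body fermionic operator $c_i^\dagger c_j$ (equivalently, each product of two Majoranas $\gamma_\mu\gamma_\nu$) is mapped to a sum of $\mathcal{O}(1)$ Pauli strings each of weight $k = \mathcal{O}(\log n)$. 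This locality bound is the entire reason the TT mapping helps: it controls the shadow norm.

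Next I would set up the classical-shadows estimator. Using the uniform random single-qubit Clifford (equivalently, random Pauli basis) ensemble on $n$ qubits, the variance of the single-shot estimator for an observable $O$ that is a Pauli string of weight $k$ is $3^k$; more precisely the shadow norm satisfies $\|O\|_{\mathrm{shadow}}^2 \le 3^k \|O\|_\infty^2$. Substituting $k = \lceil\log_3(2n+1)\rceil$ gives $3^k = \mathcal{O}(n)$, so each target $1$-body observable has shadow variance $\mathcal{O}(n)$. To estimate all $\mathcal{O}(n^2)$ such observables (or, if one prefers, the $2$-Majorana operators, of which there are also $\mathcal{O}(n^2)$) to additive error $\epsilon$ with high probability, the standard median-of-means analysis of \cite{huang2020predicting} requires $N = \mathcal{O}\!\left(\frac{\log(n)}{\epsilon^2}\cdot \max_O \|O\|_{\mathrm{shadow}}^2\right) = \mathcal{O}\!\left(\frac{n\log n}{\epsilon^2}\right)$ snapshots, where the $\log n$ absorbs the union bound over the $\mathrm{poly}(n)$ observables. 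This matches the stated sample complexity and is consistent with \cref{lem: bonet}. At this stage one should double-check that restricting attention to the regime $n \le 1/\epsilon^2$ is what makes the single-copy bound competitive — this is exactly the threshold flagged just before \cref{lem: bonet}, coming from \cite{king2025triply} — so no improvement is being claimed beyond it.

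The remaining point is the circuit count. A naive implementation of random-Pauli shadows uses a fresh, independently sampled tensor-product Clifford layer for each of the $N$ snapshots, i.e. $N$ distinct circuits. The construction of \cite{kanno2025efficient} replaces this by a single \emph{dynamic} circuit: the gates $G$ generate (within the circuit, via ancilla/classical randomness) the distribution over which single-qubit Pauli basis to measure on each wire, and the feedback operation $F$ applies the corresponding basis-change Cliffords conditioned on that sampled data before the terminal measurement. Because the random basis is drawn on-the-fly and the only ``depth'' is a constant-depth layer of single-qubit rotations plus the (constant-depth) feedback, this yields a single circuit of $\mathcal{O}(1)$ depth that realizes the same shadow ensemble. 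I would simply cite this and note that it applies verbatim here since our measurement primitive is precisely the random single-qubit Pauli measurement; the TT mapping enters only through the observables being estimated, not through the measurement circuit. Assembling: $\mathcal{O}(n\log n/\epsilon^2)$ single-copy snapshots, one $\mathcal{O}(1)$-depth measurement circuit.

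The main obstacle — really the only nontrivial ingredient — is pinning down the shadow-norm / variance bound for weight-$k$ Paulis under the random-Pauli ensemble and verifying that $3^{\lceil\log_3(2n+1)\rceil} = \Theta(n)$ rather than something larger (e.g. making sure the $\mathcal{O}(1)$ number of Pauli terms per $1$-body operator, and the union bound over $\mathcal{O}(n^2)$ targets, only cost logarithmic factors and not polynomial ones). Everything else is bookkeeping: the locality bound for TT is a cited fact, the median-of-means estimator and its guarantee are from \cite{huang2020predicting}, and the single-circuit dynamic-circuit reduction is from \cite{kanno2025efficient}. I would therefore write the proof as: (i) quote the TT locality bound; (ii) quote the shadow-norm bound and plug in $k$; (iii) quote the sample-complexity theorem of \cite{huang2020predicting} with the union bound; (iv) quote \cite{kanno2025efficient} for the circuit-count collapse.
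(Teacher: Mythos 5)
Your proposal is correct and follows essentially the same route as the paper's proof: use the $\log_3(2n)$-locality of Pauli strings under the TT mapping so that the random-Pauli classical-shadows bound $\mathcal{O}(3^w\log m/\epsilon^2)$ of \cite{huang2020predicting} gives $\mathcal{O}(n\log n/\epsilon^2)$ samples for the $\mathcal{O}(n^2)$ one-body observables, then invoke the dynamic-circuit construction of \cite{kanno2025efficient} to realize the random Pauli measurements with a single constant-depth circuit. Your write-up simply fills in the intermediate bookkeeping (shadow-norm bound, $3^{\lceil\log_3(2n+1)\rceil}=\Theta(n)$, union bound) that the paper leaves implicit.
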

\begin{proof}
    Since the TT mapping yields $\log_3(2n)$-local Pauli strings for the Majorana operators, and the sample complexity of the random Pauli measurement \cite{huang2020predicting} scales as $\mathcal{O}(\frac{3^w\log m}{\epsilon^2})$ where $w$ is the locality of Pauli string and $m$ is the number of observables, the sample complexity of estimating the $1$-body observables is $\mathcal{O}(\frac{n\log n}{\epsilon^2})$, as we deduced in the proof of \cref{lem: bonet}. Furthermore, the dynamic circuit \cite{kanno2025efficient} allows to perform the random Pauli measurements with $\mathcal{O}(1)$ measurement circuits.
\end{proof}

For the case where $n\geq \frac{1}{\epsilon^2}$, we use the results from \cite{king2025triply} as restated in the following lemma.
\begin{lemma}{\cite[from Lemmas 1.1 and 4.1]{king2025triply}}\label{lem: king} Assume that $n\geq \frac{1}{\epsilon^2}$. Then, 
there exists a shadow tomography protocol that is compatible with the JW, BK, and TT mappings, uses $\mathcal{O}(\frac{1}{\epsilon^2})$ circuits, single-copy and two-copy measurements to estimate the $1$-body observables to additive error $\epsilon$ with sample complexity,
\begin{equation}
    \mathcal{O}(\frac{\log n}{\epsilon^4}),
\end{equation}and the depth of each circuit scales as $\mathcal{O}(n)$ at most.
\end{lemma}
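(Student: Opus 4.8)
The plan is to verify that estimating the $1$-body fermionic observables under the JW, BK, or TT mapping is a direct instance of the two-copy shadow tomography task solved by King \emph{et al.}, so that the stated bounds follow by combining their Lemmas 1 and 12 with the appropriate parameter substitutions. First I would reduce the problem to Majorana bilinears: writing the $2n$ Majorana operators $\gamma_{2i-1}=c_i+c_i^\dagger$, $\gamma_{2i}=-i(c_i-c_i^\dagger)$, every $1$-body operator $c_i^\dagger c_j$ is a fixed linear combination of $\mathcal{O}(1)$ terms among $\{I\}\cup\{\,i\gamma_\mu\gamma_\nu : 1\le\mu<\nu\le 2n\,\}$ (the parity-odd single-Majorana terms never occur). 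Hence it suffices to estimate the $m=\binom{2n}{2}=\mathcal{O}(n^2)$ Hermitian operators $i\gamma_\mu\gamma_\nu$, each of unit operator norm, to additive error $\epsilon$; under each of the three mappings $i\gamma_\mu\gamma_\nu$ is a signed Pauli string, so this is exactly the observable family considered in \cite{king2025triply}, and $\log m=\mathcal{O}(\log n)$ is the origin of the $\log n$ factor.

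The protocol then proceeds in two stages, as in King \emph{et al.}: (i) a two-copy ``magnitude'' stage, consisting of Bell-type measurements on two copies of $\rho$, which with $\mathcal{O}(\log n/\epsilon^4)$ two-copy samples estimates $\lvert\tr(\rho\, i\gamma_\mu\gamma_\nu)\rvert$ for all pairs to additive error $\epsilon$ and, in particular, isolates the ``heavy'' set of pairs whose magnitude exceeds $\Omega(\epsilon)$; here the real antisymmetric covariance matrix $M_{\mu\nu}=\tr(\rho\, i\gamma_\mu\gamma_\nu)$ obeys $\|M\|_{\mathrm{op}}\le 1$ (since $i\Gamma_v\Gamma_w$ has eigenvalues $\pm1$ for orthonormal $v,w$, by the Majorana anticommutation relations), hence $\|M\|_F^2\le 2n$, and together with the regime hypothesis $n\ge 1/\epsilon^2$ this is the bound King \emph{et al.} invoke to control the number of follow-up circuits and collapse it to $\mathcal{O}(1/\epsilon^2)$; and (ii) an $\mathcal{O}(1/\epsilon^2)$-circuit single-copy stage of random-Clifford (equivalently random-Pauli, for the TT mapping) classical shadows of $\rho$ that fixes the signs of the heavy entries. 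Each Clifford circuit diagonalizing the relevant commuting Pauli families has depth $\mathcal{O}(n)$ --- this is the binding case, from the weight-$\le n$ strings produced by JW; for BK and TT the strings have weight $\mathcal{O}(\log n)$ and the Bell measurement itself has depth $\mathcal{O}(\log n)$, all within the claimed $\mathcal{O}(n)$ bound. Summing the two stages gives sample complexity $\mathcal{O}(\log n/\epsilon^4)$ and $\mathcal{O}(1/\epsilon^2)$ distinct circuits, which is the assertion.

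Since the substance is entirely in \cite[Lemmas 1 and 12]{king2025triply}, the real work is bookkeeping, and the hard part will be getting three things exactly right: (a) confirming that for each of the three fermion-to-qubit mappings the Pauli images of the $2$-Majorana operators admit diagonalizing Clifford circuits within the $\mathcal{O}(n)$ depth budget (JW is the worst case; BK and TT are strictly easier); (b) substituting $m=\mathcal{O}(n^2)$ so that their $\log m$ becomes $\log n$; and (c) checking that the hypothesis $n\ge 1/\epsilon^2$ enters precisely where it does in King \emph{et al.}, so that the two-copy magnitude stage dominates the sample count and the number of distinct circuits genuinely drops to $\mathcal{O}(1/\epsilon^2)$ rather than the $\mathcal{O}(n)$ one would get from a naive single-copy grouping of commuting families. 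A minor additional remark, relevant to how this lemma is used downstream, is that the whole argument applies verbatim with $\rho$ replaced by any density operator of the form $e^{-iHt}\rho e^{iHt}$, $e^{-iHt}B\rho Be^{iHt}$, or $\rho_\pm$ from \eqref{eq: plus rep}--\eqref{eq: minus rep}, since these are all legitimate states and $B$ is a Pauli string, so conjugation by $B$ merely permutes and re-signs the $2$-Majorana operators.
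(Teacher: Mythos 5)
Your skeleton (a two-copy Bell-sampling stage to estimate magnitudes and threshold, followed by an $\mathcal{O}(\frac{1}{\epsilon^2})$-circuit single-copy stage on the surviving observables) is the same as the paper's proof in \cref{sec: Appendix A}, but the one step that actually produces the $\mathcal{O}(\frac{1}{\epsilon^2})$ circuit count --- and with it the $\mathcal{O}(\frac{\log n}{\epsilon^4})$ total sample complexity --- is missing, and the justification you substitute for it does not work. You attribute the collapse to $\mathcal{O}(\frac{1}{\epsilon^2})$ follow-up circuits to $\norm{M}_{\mathrm{op}}\leq 1$, hence $\norm{M}_F^2\leq 2n$, combined with the regime hypothesis $n\geq \frac{1}{\epsilon^2}$. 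That Frobenius bound only limits the \emph{number} of $\epsilon$-heavy Majorana bilinears (to $\mathcal{O}(\frac{n}{\epsilon^2})$); it says nothing about how many jointly measurable groups they form, so it does not bound the number of measurement circuits by $\mathcal{O}(\frac{1}{\epsilon^2})$. The mechanism the paper uses (restated from King \emph{et al.} as \cref{lem: size of clique} and \cref{lem: coloring alg}) is different: (i) mutually anticommuting Pauli observables obey $\sum_i \tr(\rho P_i)^2\leq 1$, so after thresholding at magnitude $\sim\epsilon$ the largest clique in the relevant graph $G_\epsilon$ has size at most $\frac{4}{\epsilon^2}$; and (ii) for the specific (anti)commutation structure of the $2$-Majorana operators there is an efficient coloring using at most $\omega+1$ colors --- a structural fact that is essential, since for general graphs the chromatic number is not controlled by the clique number. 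Each color class is then a mutually commuting family measured jointly by one Clifford circuit with $\mathcal{O}(\frac{\log(n/\delta)}{\epsilon^2})$ shots, giving $\mathcal{O}(\frac{1}{\epsilon^2})$ circuits and, after a union bound over the $\mathcal{O}(n^2)$ observables, the stated $\mathcal{O}(\frac{\log n}{\epsilon^4})$ samples. The hypothesis $n\geq\frac{1}{\epsilon^2}$ is only the regime condition under which this beats the $\mathcal{O}(n)$-circuit single-copy grouping; it is not what drives the circuit count.

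A secondary issue: you describe the single-copy stage as ``random-Clifford (equivalently random-Pauli, for TT) classical shadows.'' Taken literally, global random-Clifford shadows have variance scaling like $\tr(P^2)=2^n$ for Pauli-string observables and would not yield the claimed complexity; what is needed, and what the paper uses, is a \emph{deterministic} joint Clifford measurement of each commuting color class produced by \cref{lem: coloring alg} (constructible via the stabilizer formalism, depth $\mathcal{O}(n)$, which is where the depth bound in \cref{lem: king} comes from). Your reduction of $1$-body operators to $\mathcal{O}(n^2)$ Majorana bilinears, the $\log m=\mathcal{O}(\log n)$ bookkeeping, and the closing remark that the argument applies verbatim to the states $e^{-iHt}\rho e^{iHt}$, $e^{-iHt}B\rho B e^{iHt}$, and $\rho_\pm$ are all fine and consistent with how the lemma is used downstream.
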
For concreteness, the proof of this lemma is shown below \cref{lem: coloring alg}, which immediately follows from the original work \cite{king2025triply}.

By applying these results, we obtain a shadow tomography protocol for estimating the correlation functions in \eqref{eq: comm-correlation} as follows,
\begin{theorem}\label{thm1}
    For a given precision $\epsilon>0$, there exists a shadow tomography protocol that is compatible with the JW, BK, and TT mappings, uses at most two-copy measurements, and estimates the correlation functions in \eqref{eq: comm-correlation} to the precision $\epsilon$ with the following complexity:
    \begin{itemize}
        \item if $n\le \frac{1}{\epsilon^2}$, the sample complexity scales as  $\mathcal{O}(\frac{n^{3}\log n}{\epsilon^2})$ with $\mathcal{O}(n^3)$ (JW, BK) and $\mathcal{O}(n^2)$ (TT) measurement circuits.
        \item if $n\ge \frac{1}{\epsilon^2}$, the sample complexity scales as $\mathcal{O}(\frac{n^2\log n}{\epsilon^4})$ with $\mathcal{O}(\frac{n^2}{\epsilon^2})$ measurement circuits.
    \end{itemize}
\end{theorem}
\begin{proof}
In the representation \eqref{eq: comm-correlation}, we fix $k,l$. Then, for any given fermion-to-qubit mapping among JW, BK, and TT, we can transform the observable $c_k^\dag c_l$ into a sum of four Pauli strings. This yields an expression similar to the commutator case in \eqref{eq: pauli rep}. Now it reduces to estimating the following quantity from \eqref{eq: comm-correlation}
\begin{align}\label{eq: target 1}
    \tr(\rho [e^{iHt}c_i^\dagger c_je^{-iHt},B])
\end{align}for a Pauli string $B$ from $c_k^\dag c_l$. Using the representation \eqref{eq: bose rep} and setting $A=c_{i}^\dagger c_j$, we apply the results in \cref{lem: bonet} (JW,BK), \cref{lem: ternary} (TT), and \cref{lem: king} (JW,BK,TT) to estimate $A$ with $i,j\in[n]$, by considering the following three measurements 
\begin{equation}\label{eq: A,B bose}
    \begin{split}
        &\tr(\rho_1 A),\quad \rho_1=e^{-iHt}e^{i\frac{\pi}{4}B}\rho  e^{-i\frac{\pi}{4}B} e^{iHt} \\
        &\tr(\rho_2 A),\quad \rho_2=e^{-iHt}\rho e^{iHt}\\
        &\tr(\rho_3 A),\quad  \rho_3=e^{-iHt}B\rho B e^{iHt}.
    \end{split}
\end{equation}
Therefore, in estimating \eqref{eq: target 1},  when $n\leq\frac{1}{\epsilon^{2}}$, $\mathcal{O}(\frac{n\log n}{\epsilon^2})$ samples suffice, and $\mathcal{O}(\frac{\log n}{\epsilon^4})$ when $n\ge\frac{1}{\epsilon^2}$.  Since there are $n^2$ choices for $k,l$, the total sample complexity for estimating the correlation function \eqref{eq: comm-correlation} scales as $\mathcal{O}(\frac{n^{3}\log n}{\epsilon^2})$ when $n\leq \frac{1}{\epsilon^{2}}$ and $\mathcal{O}(\frac{n^2\log n}{\epsilon^4})$ when $n\geq \frac{1}{\epsilon^{2}}$, respectively. 
\end{proof}

\begin{algorithm}
\SetAlgoLined
	\KwData{Given initial state $\rho$, precision $\epsilon>0$}
	\KwResult{approximate the correlation functions \eqref{eq: comm-correlation} for any $i,j,k,l$}

     If $n\leq \frac{1}{\epsilon^{2}}$:
        \For{$k,l=1:n$}{

        Apply the protocol in \cref{lem: bonet} (if JW or BK mapping is applied) or in \cref{lem: ternary} (if TT is applied) for computing the quantities in \eqref{eq: A,B bose}

        Use the estimates of quantities to compute the correlation functions \eqref{eq: bose rep}
        }
     else:
    
	\For{$k,l=1:n$}{

        Apply  \cref{lem: king} for computing the quantities in \eqref{eq: A,B bose}

        Use the estimates of quantities to compute the correlation functions \eqref{eq: bose rep}
        }
	\caption{Fermionic-adapted shadow tomography 1 (FAST 1)}
    \label{alg: algorithm1}
\end{algorithm}

\subsection*{Protocol for the anti-commutator case}

We turn to the anti-commutator case. In this case, the important correlation function is the retarded Green's function (RGF),
\begin{equation}\label{eq: RGF}
    G_{ab}^R(t):=-i\Theta(t)\tr(\rho\{c_i(t),c_j^\dagger\}), 
\end{equation}where $c_i(t):=e^{iHt}c_ie^{-iHt}$ and $\rho$ is the ground state of a quantum system. For a fixed time point $t$, we present a protocol that estimates the RGF's $\{G_{ab}^R(t)\}_{a,b=1}^{n}$ to additive error $\epsilon$. Similar to \cref{alg: algorithm1}, we consider the two regimes: $n\leq \frac{1}{\epsilon^{2}}$ and $n\geq \frac{1}{\epsilon^{2}}$.

We first consider the regime where $n\leq \frac{1}{\epsilon^2}$. In this regime, we can estimate the Green's function without controlled time evolution. When JW or BK mapping is applied, we use the majority rule in \cref{lem: majority rule} with the identities \eqref{eq: plus rep} and \eqref{eq: minus rep}, with the circuit in \cref{fig: circuit c}. Then, we estimate \eqref{eq: RGF} for each pair $(i,j)$. Thus, the sample complexity and the number of measurement circuits remain comparable to brute-force measurements, scaling as $\mathcal{O}(n^2)$. If the TT mapping is used, the approach in \cref{lem: ternary} for the commutator case can be reused with the identities \eqref{eq: plus rep} and \eqref{eq: minus rep} based on \cref{lem: majority rule}. This approach reduces the number of measurement circuits by an order of magnitude with respect to system size. In \cref{alg: algorithm2}, we outline this procedure as a conditional result.

In what follows, we propose two sample-efficient measurement strategies in the regime where $n\geq \frac{1}{\epsilon^{2}}$. When BK is used and $n\ge\frac{1}{\epsilon^2}$, we derive a  measurement strategy from \cite[Lemma 2.1]{king2025triply} as follows
\begin{lemma}\label{lem: ours0}
Assume that $n\geq \frac{1}{\epsilon^2}$ and BK or TT mapping is applied. Then, there exists a shadow tomography protocol that uses $\mathcal{O}(\frac{\log (\frac{1}{\epsilon})}{\epsilon^4})$ single-copy and $\mathcal{O}( \frac{\log (n)}{\epsilon^4})$ two-copy measurements, and estimates the Majorana operators to additive error $\epsilon$, with $\mathcal{O}(\frac{\log n}{\epsilon^4})$ sample complexity and $\mathcal{O}(\frac{1}{\epsilon^2})$ circuits  of $\mathcal{O}(1)$ measurement circuit depth.    
\end{lemma}
\begin{proof}
Note that the $2n$ Majorana operators mutually anti-commute. Thus, according to \cite[Lemma 2.1]{king2025triply}, there are at most $\mathcal{O}(\frac{1}{\epsilon^2})$ observables whose expectation values are larger than $\epsilon$. Thus, we first apply the Bell sampling to threshold  observables of small magnitudes and perform brute force measurements on the remaining observables. The first step requires two-copy measurements with $\mathcal{O}(\frac{\log n}{\epsilon^4})$ sample complexity and the other requires $\mathcal{O}(\frac{1}{\epsilon^2}\cdot \frac{\log (\frac{1}{\epsilon^2})}{\epsilon^2})$ single-copy brute force measurements. Since $n\ge \frac{1}{\epsilon^2}$, the sample complexity is $\mathcal{O}(\frac{\log n}{\epsilon^4})$. Note that due to the brute force measurement, this procedure requires $\mathcal{O}(\frac{1}{\epsilon^2})$ circuits.    
\end{proof}
We remark that the procedure shown in \cref{lem: ours0} can be directly applied for the case when JW mapping is applied. However, in the following section, we propose a measurement strategy specifically tailored to the JW mapping such that the number of circuits can be reduced from $\mathcal{O}(\frac{1}{\epsilon^2})$ in \cref{lem: ours0} to $\mathcal{O}(1)$ and only $\mathcal{O}(1)$ single-copy measurements suffice, which is a significant improvement.

When the JW mapping is applied and $n\ge\frac{1}{\epsilon^2}$, we can design a more efficient measurement strategy in terms of the number of circuits and , compared to the strategy in \cref{lem: ours0}. In essence, our approach is divided into three steps as follows,
\begin{enumerate}
    \item Estimate the magnitudes of Majorana operators under the JW mapping using the Bell sampling \cite{huang2021information}

    \item Throw away negligible observables and consider the remaining observables

    \item Use a chained measurement strategy to determine the signs of the remaining observables.
\end{enumerate}The first two steps are typical in shadow tomography techniques based on Bell sampling \cite{huang2021information,king2025triply}.
In the first step, we consider the following Pauli observables,  
\begin{equation}
 \{X_1, Z_1X_2,...,Z_1\cdots Z_{n-1}X_n, Y_1, Z_1Y_2,...,Z_1\cdots Z_{n-1}Y_n\},   
\end{equation}which correspond to the Majorana operators under the JW mapping. Notice that these observables mutually anticommute. Denoting this set of observables by $\{P_i\}_{i=1}^{2n}$, we apply the known Bell sampling procedure \cite{king2025triply,huang2021information} to the set $\{P_i\otimes P_i\}_{i=1}^{2n}$, and in the second step, neglect the observables that are estimated to have magnitudes less than $\frac{3\epsilon}{4}$. 

For the last step, we introduce a variant of the Bell sampling strategy with more technical detail below \cref{lem: bell-basis}. Here we provide a brief intuition behind the strategy. In the known Bell sampling strategy, we construct the mutually-commuting observables $\{P_i\otimes P_i\}_{i=1}^{2n}$. Instead of this, we consider the following construction of observables,
\begin{equation}
    P_1\otimes P_2, P_2\otimes P_3,..., P_{2n-1}\otimes P_{2n}.
\end{equation}Notice that these observables mutually commute, since $P_j$'s mutually anticommute. Thus, we can estimate these observables simultaneously. With this in mind, if we estimate the sign of $P_1$, then we can infer that of $P_2$ from the estimation of $P_1\otimes P_2$, that of $P_3$ from $P_2\otimes P_3$, and so on. We call this a chained measurement strategy. Below \cref{lem: bell-basis}, we discuss more detail for how to perform this chained measurement strategy. We here present a result for the proposed strategy as follows.
\begin{lemma}\label{lem: ours}
    Assume that $n\geq \frac{1}{\epsilon^2}$ and the JW mapping is applied. Then, there exists a shadow tomography protocol that uses $\mathcal{O}(1)$ single-copy and $\mathcal{O}(\frac{\log n}{\epsilon^4})$ two-copy measurements, and estimates the Majorana operators to additive error $\epsilon$, with $\mathcal{O}(\frac{\log n}{\epsilon^4})$ sample complexity and $\mathcal{O}(1)$ circuits of $\mathcal{O}(1)$ measurement circuit depth. 
\end{lemma}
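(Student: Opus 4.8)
The plan is to combine the three-step outline already described in the text into a single quantitative argument. First I would set up the notation precisely: let $\{P_i\}_{i=1}^{2n}$ be the anticommuting family $\{X_1, Z_1X_2,\dots, Y_1, Z_1Y_2,\dots\}$, whose pairwise products $P_i\otimes P_j$ on two copies all commute (since $P_iP_j = -P_jP_i$ for $i\neq j$ forces a sign flip that cancels on the tensor square). The key structural observation is that Bell-basis measurement on two copies of $\rho$ diagonalizes every $P\otimes P$ simultaneously, and more generally any product $P_i \otimes P_j$ with $P_i, P_j$ real-Clifford-related is measured by a fixed (problem-independent) Clifford followed by computational-basis readout; I would cite \cite{huang2021information,king2025triply} for the magnitude-estimation part verbatim and only spell out the sign-propagation variant in detail in \cref{sec: Appendix C}.

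Second, I would run the standard Bell-sampling magnitude estimator on $\{P_i\otimes P_i\}_{i=1}^{2n}$ to get estimates $\hat\mu_i$ of $|\tr(\rho P_i)|$ with additive error $\epsilon/4$; by the performance guarantee of Bell sampling (median-of-means over $\mathcal{O}(\log(n)/\epsilon^2)$ snapshots per estimator, or the stronger $\mathcal{O}(\log n/\epsilon^2)$ total count from the King \emph{et al.} analysis since these operators have variance bounded by their squared magnitudes), this costs $\mathcal{O}(\log n/\epsilon^2)$ samples and $\mathcal{O}(1)$ circuits of $\mathcal{O}(1)$ depth, because the Bell circuit plus the JW string preparation is a fixed shallow Clifford. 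Discard every $P_i$ with $\hat\mu_i < 3\epsilon/4$; the survivors all have true magnitude $\geq \epsilon/2$, so their signs are what remains to be determined, and crucially each surviving operator's value is bounded away from zero, which is exactly the condition needed for the sign-propagation step to be robust.

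Third — the sign step — I would measure the commuting family $\{P_i\otimes P_{i+1}\}_{i=1}^{2n-1}$ simultaneously (on two fresh copies of $\rho$), which by the same reasoning needs a single problem-dependent Clifford basis change and computational readout, hence $\mathcal{O}(1)$ circuits of $\mathcal{O}(1)$ depth. Estimating each $\tr(\rho\otimes\rho\,(P_i\otimes P_{i+1})) = \tr(\rho P_i)\tr(\rho P_{i+1})$ to error $\mathcal{O}(\epsilon^2)$ requires $\mathcal{O}(\log n/\epsilon^4)$ samples; since for consecutive surviving indices $|\tr(\rho P_i)\tr(\rho P_{i+1})| \geq \epsilon^2/4$, the sign of this product is resolved correctly with high probability, and chaining from a single anchored sign (obtained, e.g., from one extra single-copy estimate of $P_{i_0}$, which fits in the same sample budget) propagates to all survivors. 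A union bound over the $O(n)$ estimators with the $\log n$ factor absorbs all failure probabilities.

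The main obstacle I anticipate is handling the \emph{gaps} in the surviving index set: the chain $P_i\otimes P_{i+1}$ links $i$ to $i+1$ only, so if an index is discarded the chain breaks, and one cannot propagate the sign across the gap using only nearest-neighbor products. The fix — which is why \cref{sec: Appendix C} is needed — is to also include a sparse set of ``bridging'' products $P_i\otimes P_j$ for non-adjacent $i,j$ chosen so that the survivors form a connected graph; since the products $P_i\otimes P_j$ for $i<j$ are still mutually commuting only within carefully chosen subfamilies (two products $P_i\otimes P_j$ and $P_k\otimes P_l$ commute iff the number of shared indices among the anticommuting factors is even), one must partition the needed bridges into $\mathcal{O}(1)$ commuting groups, each measured by one Clifford circuit — this is where the problem-dependent basis and the constant (rather than graph-coloring-dependent) circuit count come from, and verifying that $\mathcal{O}(1)$ groups always suffice for the line-plus-bridges graph is the one genuinely nontrivial combinatorial point, which I would handle by noting the survivors inherit a natural linear order from the JW construction so a logarithmic-depth balanced bridging tree needs only $\mathcal{O}(1)$ rounds of pairwise-disjoint (hence commuting) products.
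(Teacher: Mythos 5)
Your outline is essentially the paper's protocol: Bell-sampling to estimate magnitudes, thresholding at $3\epsilon/4$, then a chained family of commuting two-copy products plus a single-copy anchor whose sign is propagated along the chain, with a union bound over the $\mathcal{O}(n)$ estimators. The one place where you deviate — and which you flag as the genuinely nontrivial point — is the handling of gaps in the surviving index set. The paper avoids this issue entirely: the chain is built \emph{after} thresholding, on consecutive \emph{survivors}, i.e.\ on $\{P_{r_i}\otimes P_{r_{i+1}}\}_i$ (and similarly for the $Y$-type strings). Because every pair of distinct JW Majorana strings anticommutes, any two such products commute (the two sign flips cancel), so the whole surviving chain is a single mutually commuting family with a common generalized-Bell eigenbasis (\cref{lem: bell-basis}), and it is connected by construction. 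No bridging products, no partition into $\mathcal{O}(1)$ commuting groups, and no combinatorial argument about a line-plus-bridges graph are needed; indeed your proposed bridges $P_a\otimes P_b$ generally fail to commute with chain elements sharing exactly one index, so keeping the original-index chain and patching it is strictly harder than simply re-indexing to the survivors. Your version could be made to work, but the paper's choice dissolves the obstacle.

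One quantitative slip worth fixing: estimating $|\tr(\rho P_i)|$ well enough to threshold at $3\epsilon/4$ via Bell sampling means estimating $\tr(\rho P_i)^2=\tr\bigl((\rho\otimes\rho)(P_i\otimes P_i)\bigr)$ to additive error of order $\epsilon^2$, so the magnitude step already costs $\mathcal{O}(\log n/\epsilon^4)$ samples, not $\mathcal{O}(\log n/\epsilon^2)$. This does not change the final $\mathcal{O}(\log n/\epsilon^4)$ bound, since the sign-propagation step has the same cost, but the intermediate claim as stated is too optimistic. The anchor step is fine: the survivors have magnitude at least $\epsilon/2$, so a single-copy estimate of the first surviving observable to error $\epsilon/4$ (cost $\mathcal{O}(1/\epsilon^2)$) pins its sign, and your independence argument for multiplying signs along the chain matches the paper's use of \eqref{ineq: sign}.
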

The proof of this lemma is shown below the proof of \cref{lem: bell-basis}. To put together, we obtain the following theorem that states the overall complexity of a protocol for estimating the RGF \eqref{eq: RGF}. 
\begin{theorem}\label{thm2}
    There exists a shadow tomography protocol that uses at most two-copy measurements, and estimates the RGF \eqref{eq: RGF} to additive error $\epsilon$ with the following regimes:
    \begin{itemize}
        \item if $n\le\frac{1}{\epsilon^2}$, no sample complexity improvement over the brute force measurement but only the number of circuits is improved as $\mathcal{O}(n)$ under TT mapping
        \item if $n\ge \frac{1}{\epsilon^2}$, the sample complexity scales  as $\mathcal{O}(\frac{n\log n}{\epsilon^4})$ under JW, BK, and TT mapping. In particular, $\mathcal{O}(\frac{n}{\epsilon^2})$ circuits are required for the cases of BK and TT, and $\mathcal{O}(n)$ circuits for JW.
    \end{itemize}
\end{theorem}
\begin{proof}
 Similar to the proof of \cref{thm1}, we fix $j$ in \eqref{eq: RGF} and consider the form of a sum of Pauli strings like \eqref{eq: pauli rep}. Then, estimating \eqref{eq: RGF} reduces to computing 
 \begin{align}
     \tr\left(\rho\{e^{iHt}c_ie^{-iHt},B\}\right)
 \end{align}for a Pauli string $B$ from $c_j^\dag$. By \cref{lem: majority rule}, we use either of the identities \eqref{eq: plus rep} and \eqref{eq: minus rep} depending on the measurement outcomes, and then apply \cref{lem: ternary} for the case where $n\le\frac{1}{\epsilon^2}$ and TT is applied, \cref{lem: ours0} for the case where $n\ge\frac{1}{\epsilon^2}$ and BK or TT is applied, and \cref{lem: ours} for the case where $n\ge \frac{1}{\epsilon^2}$ and JW is applied. Since there are $n$ choices for the index $j$, the total sample complexity and the number of measurement circuits required for each of the three cases can be derived as in the statement. 
\end{proof}

\begin{algorithm}
\SetAlgoLined
\KwData{Given initial state $\rho$, precision $\epsilon>0$}
	\KwResult{approximate the RGF \eqref{eq: RGF} for any $a,b$ (\textbf{with sample-efficiency gains only in the regime $n\ge\frac{1}{\epsilon^2}$})}

     If $n\leq \frac{1}{\epsilon^{2}}$:
        
        \For{$b=1:n$}{
        Perform the brute force measurement strategy if JW or BK mapping is applied, with the circuit in \cref{fig: circuit c}.
        
        Apply the method in \cref{lem: ternary} when TT is applied, with the circuit in \cref{fig: circuit d}.

        Use the resulting measurement outcomes to compute the RGFs by selecting either of the representations \eqref{eq: plus rep} or \eqref{eq: minus rep}.
        }
     else:
    
	\For{$b=1:n$}{

        Apply  \cref{lem: ours} (if JW mapping is applied) or \cref{lem: ours0} (if BK or TT is used)

        Use the resulting measurement outcomes to compute the RGFs by selecting either of the representations \eqref{eq: plus rep} or \eqref{eq: minus rep}.}	
        \caption{Fermionic-adapted shadow tomography 2 (FAST 2: conditional anti-commutator protocol)}
    \label{alg: algorithm2}
\end{algorithm}
Again we remark that unlike the commutator case in \cref{alg: algorithm1}, there needs an additional step of selecting either of the representations \eqref{eq: plus rep} or \eqref{eq: minus rep} in \cref{alg: algorithm2}. However, this additional step is performed without any further quantum simulations, by doing classical post-processing on the measurement results. Specifically, as discussed above \cref{lem: majority rule}, we count the number of $\ket{0}$ and $\ket{1}$ states in the ancilla qubit and know whether $\rho_{+}$ in \eqref{eq: plus rep} appears more than $\rho_{-}$ in \eqref{eq: minus rep} or vice versa. By this majority rule, we select  either of the representations \eqref{eq: plus rep} or \eqref{eq: minus rep}.

\subsection*{Complexity tables for the proposed  protocols and brute force measurements}
 
We summarize the overall complexity for \cref{alg: algorithm1} and \cref{alg: algorithm2} in  \cref{table1} and \cref{table2}, respectively. On the other hand, \cref{table3} shows the overall complexity for brute force measurements.

\begin{table}[h]
\centering
\begin{tabular}{|c|c|c|c|c|}
\hline
Correlation functions & \multicolumn{2}{|c|}{Commutator} & \multicolumn{2}{c|}{Anti-commutator} \\
\hline
\hline
Regime & $n\leq\frac{1}{\epsilon^{2}}$ & $n\geq\frac{1}{\epsilon^{2}}$ & $n\leq\frac{1}{\epsilon^{2}}$ & $n\geq\frac{1}{\epsilon^{2}}$ \\
\hline
Sample complexity & $\mathcal{O}(\frac{n^3\log n}{\epsilon^2})$ & $\mathcal{O}(\frac{n^2\log n}{\epsilon^4})$ & $\mathcal{O}(\frac{n^2\log n}{\epsilon^2})$\footnote{No improvement over the brute force strategy of measurement} & $\mathcal{O}(\frac{n\log n}{\epsilon^4})$ \\
\hline
Qubit count\footnote{For the case of TT mapping, the number of qubits equal to the system size when $n=\frac{3^h-1}{2}$ for tree height $h\ge 1$, otherwise it is generally larger but at most $2n$ \cite{jiang2020optimal}.} & $n$ & $2n$ & $n+1$ & $2n+2$\\
\hline 
Ancilla qubit & $0$ & $0$  & $1$ & $2$ \\
\hline
Copy count & $1$ & $2$ & $1$ & $2$\\
\hline
\end{tabular}
\caption{Summary of \cref{alg: algorithm1} and \cref{alg: algorithm2} for estimating the correlation functions \eqref{eq: comm-correlation} and \eqref{eq: RGF} in terms of sample complexity,  the number of qubits required (Qubit count), the number of ancilla qubits (Ancilla qubit), and the maximal number of copies of the initial quantum state used (Copy count).} 
\label{table1}
\end{table}

\begin{table}[htbp] 
\centering
\resizebox{\textwidth}{!}{%
\begin{tabular}{|c|c|c|c|c|c|c|c|}
\hline
Correlation functions & \multicolumn{3}{|c|}{Commutator} & \multicolumn{4}{c|}{Anti-commutator} \\
\hline
\hline
Regime & $n\leq\frac{1}{\epsilon^{2}}$ & $n\leq\frac{1}{\epsilon^{2}}$ & $n\geq\frac{1}{\epsilon^{2}}$ & $n\leq\frac{1}{\epsilon^{2}}$ &$n\leq\frac{1}{\epsilon^{2}}$ & $n\geq\frac{1}{\epsilon^{2}}$ & $n\geq\frac{1}{\epsilon^{2}}$ \\
\hline
Mapping & JW,BK & TT & JW,BK,TT & JW, BK & TT & BK, TT & JW \\ 
\hline
Measurement & MMC & DC & B + MMC & BM & DC & B + BM  & B + BM\\
\hline 
Circuit construction time & $\mathcal{O}(n^6)$ & $\mathcal{O}(n^2)$ & $\mathcal{O}(\frac{n^5}{\epsilon^2})$ & $\mathcal{O}(n^2)$ & $\mathcal{O}(n)$ & $\mathcal{O}(\frac{n}{\epsilon^2})$ & $\mathcal{O}(n)$ \\
\hline
Circuit depth & $\mathcal{O}(n)$ & $\mathcal{O}(1)$ & $\mathcal{O}(n)$  & $\mathcal{O}(1)$ & $\mathcal{O}(1)$ & $\mathcal{O}(1)$  & $\mathcal{O}(1)$ \\
\hline
Circuit count & $\mathcal{O}(n^3)$ & $\mathcal{O}(n^2)$ & $\mathcal{O}(\frac{n^2}{\epsilon^2})$  & $\mathcal{O}(n^2)$ & $\mathcal{O}(n)$ & $\mathcal{O}(\frac{n}{\epsilon^2})$ & $\mathcal{O}(n)$ \\
\hline
Circuit & \cref{fig: circuit a} & \cref{fig: circuit b} & \cref{fig: circuit a},\cref{fig: circuit e} & \cref{fig: circuit c}  & \cref{fig: circuit d} & \cref{fig: circuit c},\cref{fig: circuit f}  & \cref{fig: circuit c},\cref{fig: circuit f}\\
\hline
Result & \cref{lem: bonet} & \cref{lem: ternary} & \cref{lem: king} & - &\cref{lem: ternary} & \cref{lem: ours0}  & \cref{lem: ours} \\
\hline
\end{tabular}%
}
\caption{Summary of \cref{alg: algorithm1} and \cref{alg: algorithm2} for estimating the correlation functions \eqref{eq: comm-correlation} and \eqref{eq: RGF} in terms of the fermion-to-qubit mapping (Mapping), the types of measurements (Measurement), classical time for constructing measurement circuits (Circuit construction time), the circuit depth required for the corresponding measurements (Circuit depth), the number of circuits to execute (Circuit count), circuit structure used for execution (Circuit), and the corresponding lemma (Result). The measurement types are abbreviated as follows:  Clifford measurements for mutually commuting observables (MMC), random Pauli measurements combined with dynamic circuits (DC), Bell sampling (B), and brute force measurements in the computational basis (BM). As discussed in \cite{king2025triply}, techniques in the stabilizer formalism \cite{aaronson2004improved} can be used to construct a Clifford circuit for MMC within  $\mathcal{O}(n^3)$ time. Therefore, in the cases where MMC is considered, the time complexity is estimated as $\mathcal{O}(n^3)\times \text{Circuit count}$, since MMC is performed independently for each circuit in the protocols.}
\label{table2}
\end{table}

\begin{table}[h]
\centering
\begin{tabular}{|c|c|c|}
\hline
Correlation functions & Commutator & Anti-commutator \\
\hline
\hline
Sample complexity & $\mathcal{O}(\frac{n^4\log n}{\epsilon^2})$  & $\mathcal{O}(\frac{n^2\log n}{\epsilon^2})$ \\
\hline
Qubit count\footnote{For the case of TT mapping, the number of qubits equal to the system size when $n=\frac{3^h-1}{2}$ for tree height $h\ge 1$, otherwise it is generally larger but at most $2n$ \cite{jiang2020optimal}.} & $n$ & $n+1$ \\
\hline 
Ancilla qubit & $0$ & $1$ \\
\hline
Time complexity & $\mathcal{O}(1)$ & $\mathcal{O}(1)$\\
\hline 
Circuit depth & $\mathcal{O}(1)$ & $\mathcal{O}(1)$  \\
\hline 
Circuit count & $\mathcal{O}(n^4)$ & $\mathcal{O}(n^2)$ \\
\hline
Circuit & \cref{fig: circuit a} & \cref{fig: circuit c} \\
\hline
\end{tabular}
\caption{Summary of brute force measurements for computing the correlation functions \eqref{eq: comm-correlation} and \eqref{eq: RGF}. These approaches require only single-copy measurements. }
\label{table3}
\end{table}

\subsection{Numerical result}
We validate \cref{alg: algorithm1} for the estimation of bosonic correlation functions \eqref{eq: comm-correlation} with numerical simulations. We focus on the regime $n\le\frac{1}{\epsilon^2}$, which is of practical interest as will be shown in \cref{col: boson}. We compare our protocol in \cref{lem: ternary} with the typical brute force strategy that estimates the values of the form $\tr\left(\rho [e^{iHt}\gamma_i\gamma_je^{-iHt},\gamma_k\gamma_l]\right)$ via a Hadamard test. That is, controlled Pauli operations for the Pauli strings corresponding to the products of Majorana operators, $\gamma_i\gamma_j$ and $\gamma_k\gamma_l$, are implemented, as the setting in \cref{table: comparison 2} with the circuits \cite{mootz2024adaptive}. The resulting correlation values  can be used to estimate the correlation functions in \eqref{eq: comm-correlation}.

We use the TT mapping for the two approaches. For demonstration purposes, we study a one-dimensional toy model with periodic boundary condition (i.e. $c_{n+1}=c_1$), known as the Su–Schrieffer–Heeger (SSH) model, 
\begin{align}
H = -V_{nn}\sum_{i=1}^n(1+(-1)^i\frac{\delta}{2})[c_i^\dag c_{i+1}+ c_{i+1}^\dag c_{i}]-\mu \sum_{i=1}^n c_i^\dag c_i.
\end{align}Here, we set $V_{nn}=1$ and $\delta=0.4$. In addition, with respect to system size $n$, we adjust $\mu$ such that the zero momentum state becomes the ground state of the Hamiltonian
\begin{align}
    \ket{\psi_{GS}}=\frac{1}{\sqrt{n}}\sum_{j=1}^n c_j^\dag \ket{0}.
\end{align}
For our simulations, we implement a Pauli algebra backend that leverages symbolic dictionary representations for the efficient construction and manipulation of fermionic operators. To execute the classical shadow sampling, we compute the exact statevectors and accelerate the measurement process by distributing the required shots across parallelized routines. For the brute-force measurements, we emulate the Hadamard tests with Bernoulli random variables whose expectation values correspond to given unitary observables.

\cref{fig: numerical test} illustrates the scaling of the maximum of errors between exact and estimated correlation functions with respect to system size, which is defined by
\begin{align}\label{eq: test example}
    \max_{i,j\in[n]}\abs{f_{ij}-\widetilde{f}_{ij}},\quad f_{ij}=\tr\left(\rho \{e^{iHt}c_i^\dag c_je^{-iHt},c_a^\dag c_b\}\right).
\end{align}Here $\widetilde{f}_{ij}$ represents the estimated expectation value and the indices, $a,b\in[n]$, are fixed. We set $t=1$ for the simulations in \cref{fig: numerical test}.

Regarding the shot cost, we fix total measurement budgets such that $N_{\text{FAST}}=3N_{\text{brute-force}}=80000$, which denote the total number of shots for the FAST and brute-force measurements in each of the simulation results in \cref{fig: numerical test}. With this setting, we can make the variance of estimates obtained from the two methods comparable in estimating the correlation function, $\tr\left(\rho [e^{iHt}\gamma_i\gamma_je^{-iHt},\gamma_k\gamma_l]\right)$. One reason is that the variance for brute-force measurements with Hadamard tests is no greater than $1$, while the FAST involves a sum of three variances of independent classical shadows, due to the identity \eqref{eq: bose rep}. 
With the fixed total shot numbers, we distribute them across the required measurement circuits. 
For the case of brute-force measurements, we require a measurement circuit for each combination $(i,j,k,l)$. Here the number of pairs $(k,l)$ is either $1$ or $4$, since the two indices correspond to a fixed one-body observable in  \eqref{eq: test example}. In contrast, the FAST requires only either $3$ or $12$ circuits for measurements, according to \cref{lem: ternary}. Additionally, the FAST does not require  controlled Pauli operations, while the brute-force measurements do. This is summarized in \cref{table: numerical test}. 

We perform 10 independent simulations for each system size $n$, plotting the resulting mean maximum error and its standard deviation in \cref{fig: numerical test}. 
Given the fixed shot number, the error estimate, $\max_{i,j}\abs{f_{ij}-\widetilde{f}_{ij}}$, increases with system size. Importantly, \cref{lem: ternary} implies that the  error within our protocol  scales as $\mathcal{O}\left(\sqrt{n\log n}\right)$, while the error for brute-force measurements scales as $\mathcal{O}(n\sqrt{\log n})$. We can clearly see this  quantitative divergence in \cref{fig: numerical test}, by observing that the error occurring from brute-force measurements grows much faster than our method. From this observation, we expect that for large systems, the FAST would outperform brute-force measurements.

\begin{figure}[htbp]
    

    \begin{center}
   \begin{subfigure}[b]{0.5\textwidth}
    \centering
    \includegraphics[width=\textwidth]{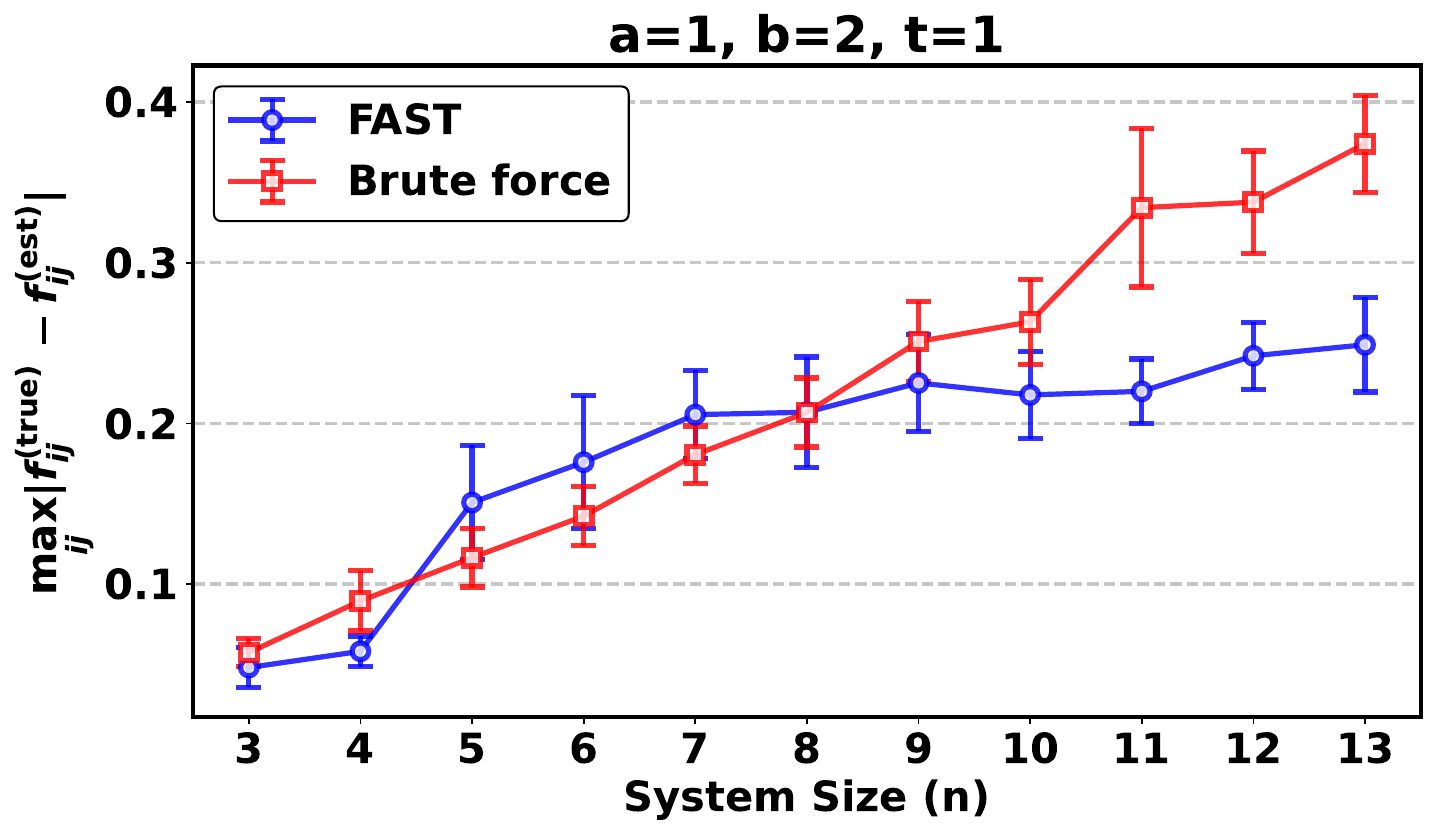}
    \end{subfigure}
    \begin{subfigure}[b]{0.5\textwidth}
    \centering
    \includegraphics[width=\textwidth]{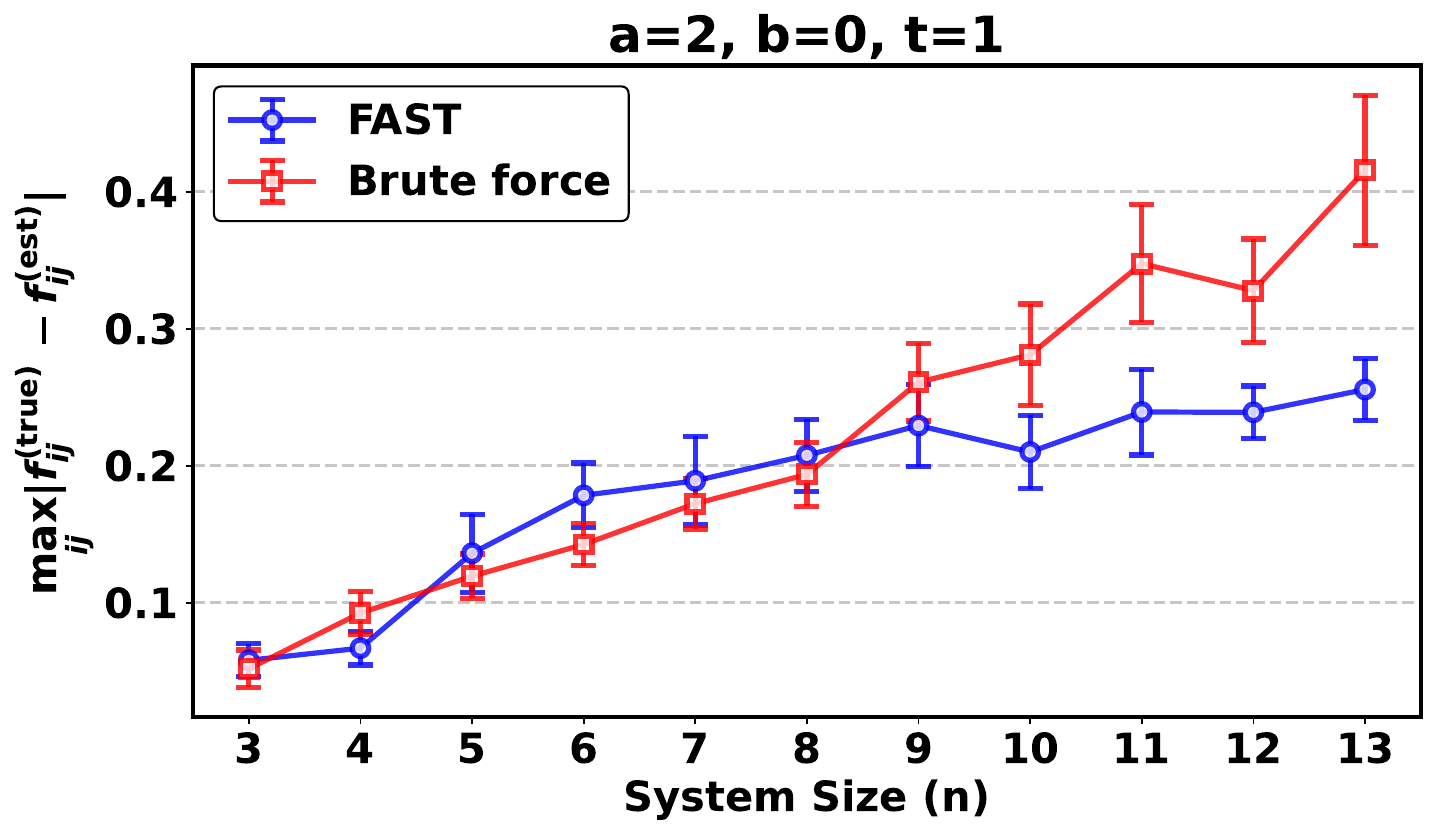}
    \end{subfigure}

\end{center}

    \caption{Simulation results that show the maximum of errors between exact and estimated correlation functions of the form in \eqref{eq: test example}, which are obtained from the FAST and brute-force measurements. The parameters $(a,b)$ are set to $(1,2)$ and $(2,2)$ for the top and bottom panels, respectively. The y-axis indicates the maximum error between exact and estimated correlation functions.} 
    \label{fig: numerical test}
\end{figure}

\begin{table}[h]
\centering
\begin{tabular}{|c|c|c|}
\hline
Method & FAST & Brute force \\
\hline
\hline
The number of measurement circuits & $3n_{ab}$ & $4n_{ab}n^2$ \\
\hline
Controlled Pauli operation & No & Yes \\
\hline
Ancilla qubit & $0$ & $1$\\
\hline
\end{tabular}
\caption{Comparison of our method (FAST) with a brute-force strategy of measurements for the simulations in \cref{fig: numerical test}. The $n_{ab}$ indicates the number of Majorana operators that form the $1$-body observable $c_a^\dag c_b$. If $a=b$, $n_{ab}=1$, otherwise $n_{ab}=4$.}
\label{table: numerical test}
\end{table}

\section*{Discussion}

In this work, we present a framework for estimating the dynamical correlation functions using shadow tomography techniques. We introduce Fermionic-Adapted Shadow Tomography (FAST) protocols (\cref{alg: algorithm1} and \cref{alg: algorithm2}), for efficiently estimating the commutator and anti-commutator correlation functions, respectively. The connection is established by reformulating these functions in a form amenable to direct measurement, thereby enabling their estimation via shadow tomography techniques. We expect the proposed protocols to be broadly applicable to practical problems in quantum simulation of many-body systems.

We remark that from the commutator representation \eqref{eq: bose rep}, and the anti-commutator ones in \eqref{eq: plus rep} and \eqref{eq: minus rep}, we can straightforwardly estimate the general form of the correlation functions \eqref{eq: function rep} by summing them up, $\tr(\rho A(t)B)$ where $A$ and $B$ correspond to $1$-body fermionic observables. In that process, we will use the measurement strategies in \cref{alg: algorithm1} only, since they target $1$-body fermionic observables. Therefore, we can extend our protocol to sample-efficient estimation for two-particle correlation functions of the form,  $\tr(\rho c_i(t)^\dag c_j(t) c_k(0)^\dag c_l(0))$, and the sample-complexity improvement would follow from \cref{alg: algorithm1}. We formally state this as in the following corollary,
\begin{corollary}\label{col: boson}    
    Estimating the density-density response function \eqref{eq: ddres} to additive error $\epsilon$ requires $\mathcal{O}\left(\frac{n^3\log n}{\epsilon^2}\right)$ sample complexity using \cref{alg: algorithm1} with $\mathcal{O}(n^3)$ circuits (JW, BK) or $\mathcal{O}(n^2)$ circuits (TT), while the brute force measurement demands $\mathcal{O}\left(\frac{n^4\log n}{\epsilon^2}\right)$ with $\mathcal{O}(n^4)$ circuits.
\end{corollary}

As a potential application of \cref{alg: algorithm2}, we may consider the estimation of the retarded Green’s function in momentum space. To illustrate this, for the system with one fermionic mode at each lattice site, the single-particle Green's functions in the momentum space is given by
\begin{align}\label{eq: rgf_f}
    G_{k,\sigma}^R(t) = \frac{1}{n}\sum_{a,b=1}^nG_{ab,\sigma}^R(t)e^{-ik(a-b)}.
\end{align}To estimate $G_{k,\sigma}^R(t)$ within a target tolerance $\epsilon_{\mathrm{tol}}$, each entry $G_{ab,\sigma}^R(t)$ must be computed to precision $\epsilon = \frac{\epsilon_{\mathrm{tol}}}{n}$ by the Gershgorin circle theorem in the worst-case scenario. This places the problem in the regime $n \le \frac{1}{\epsilon^2}$ of \cref{alg: algorithm2}, where our protocol does not improve sample complexity but reduce the number of measurement circuits by an order of magnitude with respect to system size. However, we anticipate no general improvement in sample complexity for estimating Green's functions. This limitation arises from the mutual anti-commutation relations of Majorana operators, a point also noted by Kökcü et al. \cite{kokcu2024linear}.

\section*{Methods}

\subsection*{The triply efficient shadow tomography for learning $1$-body observables}

In this section, we review on a specialized application of the shadow tomography protocol by King \emph{et al} \cite{king2025triply}, in order to derive the result in \cref{lem: king}. 

\begin{lemma}{\cite[Lemma 1.1]{king2025triply}}\label{lem: size of clique}
    The size of the largest clique in $G_\epsilon$ is at most $\frac{4}{\epsilon^2}$ with high probability.
\end{lemma}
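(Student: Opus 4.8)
The plan is to combine a simple operator-norm identity for pairwise anticommuting Hermitian involutions with the accuracy guarantee of the Bell-sampling step. Recall that the vertices of $G_\epsilon$ are the Pauli observables $P_i$ retained after the thresholding step --- those whose Bell-sampling estimate of $\abs{\tr(\rho P_i)}$ is at least $\tfrac{3\epsilon}{4}$ --- and that a clique in $G_\epsilon$ is a family $\{P_i\}_{i\in S}$ of such observables that pairwise anticommute. So I want to show that no more than $\tfrac{4}{\epsilon^2}$ pairwise anticommuting observables can all have non-negligible estimated expectation.

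The key deterministic ingredient is the following: if $Q_1,\dots,Q_k$ are Hermitian with $Q_i^2=I$ and they pairwise anticommute, then $\sum_{i=1}^{k}\tr(\rho Q_i)^2\le 1$ for every state $\rho$. I would prove this exactly as in \cite{king2025triply}: for real coefficients $a_1,\dots,a_k$ the operator $M=\sum_i a_iQ_i$ satisfies $M^2=\big(\sum_i a_i^2\big)I$ because the cross terms $a_ia_j(Q_iQ_j+Q_jQ_i)$ vanish, hence $\norm{M}=\big(\sum_i a_i^2\big)^{1/2}$; applying $\abs{\tr(\rho M)}\le\norm{M}$ with $a_i=\tr(\rho Q_i)$ and rearranging yields the claim. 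Specializing the $Q_i$ to the Majorana operators under Jordan--Wigner (or, more generally, to any pairwise anticommuting family of $1$-body Majorana products) shows that at most $1/\eta^2$ of them can have true expectation magnitude $\ge\eta$.

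It then remains to pass from true to estimated magnitudes. For this I would invoke the standard concentration analysis of Bell sampling: with the scheduled number of two-copy samples, a Hoeffding bound together with a union bound over the $\mathrm{poly}(n)$ candidate observables guarantees that, with high probability, every estimate of $\abs{\tr(\rho P_i)}$ is within $\tfrac{\epsilon}{4}$ of its true value. On this good event every vertex of $G_\epsilon$ has true expectation magnitude at least $\tfrac{3\epsilon}{4}-\tfrac{\epsilon}{4}=\tfrac{\epsilon}{2}$, so applying the inequality above with $\eta=\tfrac{\epsilon}{2}$ to any clique $S$ gives $\abs{S}\cdot\tfrac{\epsilon^2}{4}\le\sum_{i\in S}\tr(\rho P_i)^2\le 1$, i.e.\ $\abs{S}\le\tfrac{4}{\epsilon^2}$, which is the assertion.

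The main obstacle here is bookkeeping rather than conceptual: one must choose the slack in the thresholding rule and the accuracy target of the Bell-sampling estimator consistently, so that the surviving observables provably have true magnitude $\ge\epsilon/2$, and make sure the union bound over all candidate observables is absorbed into the ``high probability'' statement without inflating the complexity beyond the target $\mathcal{O}(\frac{\log n}{\epsilon^4})$. The anticommutation structure that makes the norm identity applicable is automatic for the Majorana family, and for any clique it is available by definition, so no extra work is needed on that front.
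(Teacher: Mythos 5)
Your proof is correct and follows essentially the same argument as the cited source: the paper itself does not prove this lemma but simply imports it from \cite[Lemma 1]{king2025triply}, whose proof is exactly the combination you give of the norm identity $\sum_i \tr(\rho Q_i)^2 \le 1$ for pairwise anticommuting Hermitian involutions with the Bell-sampling accuracy guarantee that surviving vertices have true magnitude at least $\epsilon/2$. Your bookkeeping (threshold $3\epsilon/4$, estimation error $\epsilon/4$, hence $\abs{S}\cdot\epsilon^2/4\le 1$) is consistent with how the paper uses the lemma, so nothing further is needed.
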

\begin{lemma}{\cite[Lemma 4.1]{king2025triply}}\label{lem: coloring alg}
Let $G'$ be a subgraph of the commutation graph of $G(\mathcal{F}_1^{(n)})$ of $1$-body observables, and $\omega$ is the size of the largest clique in $G'$. Then, there is a classical algorithm that outputs a coloring of $G'$ with $\mathcal{O}(n^2\omega)$ runtime using at most $\omega+1$ colors. 
\end{lemma}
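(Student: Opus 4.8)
The plan is to recognize $G'$ as the line graph of a graph on the $2n$ Majorana modes and then apply a constructive form of Vizing's theorem. First I would reduce the problem to edge coloring. Each $1$-body observable in $\mathcal{F}_1^{(n)}$ is, up to a phase, a Majorana quadratic $i\gamma_\mu\gamma_\nu$ with $\mu\neq\nu$, so it is labeled by the unordered pair $\{\mu,\nu\}$, i.e. by an edge of the complete graph $K_{2n}$ on vertex set $[2n]$. Two such observables $i\gamma_\mu\gamma_\nu$ and $i\gamma_\alpha\gamma_\beta$ anticommute exactly when $|\{\mu,\nu\}\cap\{\alpha,\beta\}|=1$, i.e. when the corresponding edges of $K_{2n}$ share exactly one endpoint, and commute otherwise. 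Hence the graph $G(\mathcal{F}_1^{(n)})$ of (anti)commutation relations among the $1$-body observables --- two vertices adjacent iff the observables anticommute --- is precisely the line graph $L(K_{2n})$, and the induced subgraph $G'$ on any subset of these observables equals $L(H)$, where $H$ is the graph on $[2n]$ whose edge set is exactly the vertex set of $G'$; in particular $|V(H)|\leq 2n$ and $|E(H)|=|V(G')|$. A proper vertex coloring of $L(H)$ is by definition a proper edge coloring of $H$, so it suffices to edge-color $H$ using at most $\omega+1$ colors within the stated time.

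Next I would edge-color $H$ via Vizing's theorem, which guarantees a proper edge coloring with $\Delta(H)+1$ colors; its standard constructive proof by rotating fans and Kempe-chain swaps (for instance the Misra--Gries procedure) produces one in time $O(|V(H)|\cdot|E(H)|)$. A naive greedy edge coloring would only give $2\Delta(H)-1$ colors, which is why Vizing is needed here. To bound the number of colors by $\omega$, note that the $\Delta(H)$ edges incident to a maximum-degree vertex of $H$ are pairwise adjacent in $L(H)=G'$ and so form a clique there, whence $\omega\geq\Delta(H)$ and the coloring uses at most $\Delta(H)+1\leq\omega+1$ colors. For the running time, every vertex of $H$ has degree at most $\Delta(H)\leq\omega$, so $|E(H)|\leq\tfrac12|V(H)|\,\Delta(H)\leq n\omega$; combined with $|V(H)|\leq 2n$ this gives $O(|V(H)|\cdot|E(H)|)=O(n^2\omega)$ for the edge-coloring routine, while building $H$ from $G'$ costs only $O(|E(H)|)=O(n\omega)$ more. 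This yields the claim.

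I expect the only substantive step is spotting the line-graph structure so that Vizing's theorem applies; the rest is routine. The one point needing care is that $G'$ must be taken as the \emph{induced} subgraph on the retained observables: this is what makes the star cliques of $L(H)$ actually present in $G'$ and hence forces $\omega\geq\Delta(H)$. For a non-induced subgraph one could only color the ambient $L(H)$ and guarantee $\omega(L(H))+1$ colors, possibly larger than $\omega(G')+1$. In the intended application --- coloring the commutation graph restricted to the observables that survive Bell sampling, whose clique number is controlled by \cref{lem: size of clique} --- this induced-subgraph hypothesis holds, so the bound $\omega+1$ is exactly what comes out; the only remaining routine check is the precise $O(|V|\cdot|E|)$ running time of whichever standard edge-coloring implementation one uses.
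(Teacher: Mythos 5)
This lemma is not proved in the paper at all---it is imported verbatim from \cite[Lemma 12]{king2025triply}---so there is no internal proof to compare against; judged on its own, your proof is correct and reconstructs essentially the argument of the cited reference. Identifying the anticommutation graph of the Majorana quadratics with the line graph $L(K_{2n})$, so that $G'=L(H)$ for the graph $H$ on the $2n$ Majorana modes, and then running constructive Vizing (Misra--Gries) to edge-color $H$ with $\Delta(H)+1\leq\omega+1$ colors in $O(|V(H)|\,|E(H)|)=O(n^2\omega)$ time is exactly the intended route, and your caveat that $G'$ must be the \emph{induced} subgraph (so that the star at a maximum-degree vertex of $H$ certifies $\omega\geq\Delta(H)$) is the correct reading of the statement and of its use for the thresholded graph $G_\epsilon$ in \cref{lem: size of clique}.
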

Using these lemmas, we prove \cref{lem: king}, which follows from the original work \cite{king2025triply}.
\begin{proof}
Let $G_{\epsilon}$ be the subgraph of the commutation graph for the $1$-body observables after thresholding observables of magnitudes less than $\epsilon$.

By \cref{lem: size of clique} and \cref{lem: coloring alg}, coloring $G_\epsilon$ requires $\frac{4}{\epsilon^2}+1$ colors, and for each color, there are at most $\mathcal{O}(n^2)$ observables, since the number of $1$-body observables is $\mathcal{O}(n^2)$. This means that measuring the observables for each color requires $\mathcal{O}(\frac{\log \frac{n}{\delta}}{\epsilon^2})$ sample complexity using Cliiford measurements \cite{aaronson2004improved}. Since there are $\mathcal{O}(\frac{1}{\epsilon^2})$ colors, the total sample complexity is $\mathcal{O}(\frac{\log \frac{n}{\delta}}{\epsilon^4})$. Since the number of $1$-body observables scales as $\mathcal{O}(n^2)$, by setting the new failure probability as $\delta \leftarrow \frac{\delta}{n^2}$, we can learn the observables in $G_\epsilon$ with probability at least $1-\delta$ using $\mathcal{O}(\frac{\log n}{\epsilon^4})$ samples, and $\mathcal{O}(\frac{1}{\epsilon^2})$ circuits corresponding to the colors.   
Therefore, measuring all observables in $G_\epsilon$ with single-copy measurements can be done with $\mathcal{O}(\frac{\log n}{\epsilon^4})$ sample complexity and $\mathcal{O}(\frac{1}{\epsilon^2})$ Clifford measurement circuits.

\end{proof}

\subsection*{Probabilistic state preparation based on measurements}

Within the circuit \cref{fig: circuit c}, the state is evolved as follows,
\begin{equation}\label{eq: fig1c}
\begin{split}
    &\ketbra{0}\otimes \rho\\
    &\rightarrow \ketbra{+}\otimes \rho\\
    &\rightarrow \ketbra{0}\otimes \rho + \ketbra{1}{0}\otimes  B\rho + \ket{0}\bra{1}\otimes  \rho B + \ketbra{1}\otimes B \rho B \\
    &\rightarrow \ketbra{0}\otimes e^{iHt}\rho e^{-iHt} + \ket{1}\bra{0}\otimes  e^{iHt} B\rho e^{-iHt} + \ket{0}\bra{1}\otimes  e^{iHt}\rho B e^{-iHt} + \ketbra{1}\otimes e^{iHt} B \rho B e^{-iHt}\\
    &\rightarrow \ketbra{+}\otimes e^{iHt}\rho e^{-iHt} + \ket{-}\bra{+}\otimes  e^{iHt} B\rho e^{-iHt} + \ket{+}\bra{-}\otimes  e^{iHt}\rho B e^{-iHt} + \ketbra{-}\otimes e^{iHt} B \rho B e^{-iHt}\\
    & \Longrightarrow  \rho_{+} \text{ if 0 in the ancilla with probability } C_{+}^2,\; \rho_{-} \text{ if 1 in the ancilla with probability } C_{-}^2,
\end{split}
\end{equation}where $\rho_{\pm}$ and $C_{\pm}$ are defined in \eqref{eq: plus rep} and \eqref{eq: minus rep}. Similarly, we can obtain states $\rho_{\pm}$ within the circuits  \cref{fig: circuit d}, and \cref{fig: circuit f}.

\subsection*{Chained measurement strategy for the JW-mapped Majorana operators}

In this section, we prove \cref{lem: ours}. To proceed, we first consider the following preliminary exercise.

For the independent events that $C=1$ and $D=1$, we let $\mathbb{P}(C=1)\geq 1-\delta_1$ and $\mathbb{P}(D=1)\geq 1-\delta_2$. Then, we have
\begin{equation}\label{ineq: sign}
    \mathbb{P}(CD=1)\geq \mathbb{P}(C=1,D=1)=\mathbb{P}(C=1)\mathbb{P}(D=1)\geq (1-\delta_1)(1-\delta_2).
\end{equation}This implies that if random variables, $\hat{a}$ and $\hat{b}$, satisfy that $\text{sign}(\hat{a})=\text{sign}(a)$ and $\text{sign}(\hat{b})=\text{sign}(b)$ with high probability for the corresponding true values, $a$ and $b$, then $\text{sign}(\hat{a}\hat{b})=\text{sign}(ab)$ with h.p. This can be understood by setting $C=\text{sign}(\hat{a})\text{sign}(a)$ and $D=\text{sign}(\hat{b})\text{sign}(b)$.

The RGF calculations in \eqref{eq: RGF} consider the annihilation and creation observables, which are represented by the Majorana operators under the JW mapping
\begin{equation}\label{eq: sx-sy}
 S_X:=\{X_1, Z_1X_2,...,Z_1\cdots Z_{n-1}X_n\}, \quad S_Y:=\{Y_1, Z_1Y_2,...,Z_1\cdots Z_{n-1}Y_n\}.   
\end{equation}For convenience, we denote $\{P_i\}_{i=1}^n$ and $\{Q_i\}_{i=1}^n$ to be the sets $S_X$ and $S_Y$. The indices are arranged consistently with the elements of the sets. For example, $P_1=X_1$, $P_3=Z_1Z_2X_3$, $Q_n=Z_1\cdots Z_{n-1}Y_n$, and so forth. Notice that the observables in both sets mutually anticommute. In this case, the use of the coloring strategy  \cite{king2025triply} does not give any efficiency for the problem here, since any  subset of the Majorana observables consists of mutually anticommuting Paulis.

Here we introduce a novel shadow tomography protocol based on Bell sampling. Our protocol is tailored to the set of Majorana observables in \eqref{eq: sx-sy}. Suppose that the Bell sampling for neglecting observables of small magnitudes is already performed \cite{king2025triply,huang2021information}. We let $\{P_{r_i}\}_{i=1}^I$ and $\{Q_{k_j}\}_{j=1}^J$ be the remaining observables as subsets of $S_X$ and $S_Y$, respectively, where $r_i$'s and $k_j$'s are arranged in increasing order.  We define the two sets of observables as follows,
\begin{equation}\label{eq: bx-by}
        B_X := \{P_{r_i}\otimes P_{r_{i+1}} : i\in [I-1] \},\quad B_Y := \{Q_{k_j}\otimes Q_{k_{j+1}} : j\in [J-1] \}. 
\end{equation}
In the following, we show that a property of the two sets \eqref{eq: bx-by}, 
\begin{lemma}\label{lem: bell-basis}
    The Pauli strings in $B_X$ mutually commute with the common eigenstates as follows, 
    \begin{equation}
        \{\ket{\text{Bell}}_{r_1,n+r_2}\ket{\text{Bell}}_{r_2,n+r_3}\cdots \ket{\text{Bell}}_{r_{I-1},n+r_I}\ket{b}:\; b\text{ is the computational basis on the remaining qubits} \},
    \end{equation}where $\ket{\text{Bell}}_{a,b}$ denotes the four possible Bell states acting on qubit $a$ and qubit $b$. Similarly, the Pauli strings in $B_Y$ have eigenstates as 
    \begin{equation}
        \{\ket{\text{Bell}}_{k_1,n+k_2}\ket{\text{Bell}}_{k_2,n+k_3}\cdots \ket{\text{Bell}}_{k_{J-1},n+k_J}\ket{b}:\; b\text{ is the computational basis on the remaining qubits} \}.
    \end{equation}
    
\end{lemma}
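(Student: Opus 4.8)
The plan is to establish \cref{lem: bell-basis} by a direct argument combining two facts: first, that Bell states are common eigenstates of the ``doubled'' Pauli operators $P\otimes P$, and second, that the chained products $P_{r_i}\otimes P_{r_{i+1}}$ can be rewritten as products of such doubled operators acting on overlapping qubit registers. Concretely, I would first recall the standard identity: for a self-adjoint Pauli string $P$ acting on a register of $n$ qubits, the four Bell-type states $\ket{\mathrm{Bell}}_{a,b}$ on a pair of registers (the $a$-th qubit of the first copy and the $(n+b)$-th qubit of the second copy) are simultaneous $\pm 1$ eigenstates of $P_a \otimes P_b$ when $P_a, P_b$ are single-qubit Paulis, and more generally the tensor product of Bell states across matching qubits is a common eigenstate of $P\otimes P$. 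This is the content already invoked in the known Bell-sampling procedure \cite{huang2021information,king2025triply}, so I would cite it and specialize.

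The key step is to observe that each element $P_{r_i}\otimes P_{r_{i+1}}$ of $B_X$ is \emph{not} of the form $P\otimes P$, but the whole family $\{P_{r_i}\otimes P_{r_{i+1}}\}_{i\in[I-1]}$ generates the same abelian group as (a subset of) the doubled operators, because of the telescoping structure of the chain: the product of a contiguous run $\bigl(P_{r_1}\otimes P_{r_2}\bigr)\bigl(P_{r_2}\otimes P_{r_3}\bigr)\cdots$ collapses on the ``interior'' indices since $P_{r_j}^2 = I$. So I would proceed in two sub-steps. (i) Mutual commutativity: show $[P_{r_i}\otimes P_{r_{i+1}},\, P_{r_j}\otimes P_{r_{j+1}}] = 0$ for all $i,j$. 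For $|i-j|\ge 2$ the two operators act (in each copy) on anticommuting Paulis, and the sign $(-1)$ from the first tensor factor cancels the $(-1)$ from the second; for $|i-j|=1$ they share one Pauli factor ($P_{r_{i+1}}$) which contributes $+1$, while the other two factors anticommute, again giving an even number of sign flips. (ii) Exhibit the common eigenbasis: I would verify that $\ket{\mathrm{Bell}}_{r_1,n+r_2}\ket{\mathrm{Bell}}_{r_2,n+r_3}\cdots\ket{\mathrm{Bell}}_{r_{I-1},n+r_I}\ket{b}$ is an eigenvector of each $P_{r_i}\otimes P_{r_{i+1}}$ by directly tracking how the Pauli string acts. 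The operator $P_{r_i}$ is supported on qubits $1,\dots,r_i$ in the first copy and $P_{r_{i+1}}$ on qubits $1,\dots,r_{i+1}$ in the second copy; on each Bell pair $\ket{\mathrm{Bell}}_{r_m,n+r_{m+1}}$ it acts via a (possibly partial) Pauli on one leg, and one uses the eigenrelation $(\sigma\otimes\sigma^{T})\ket{\mathrm{Bell}} = \pm\ket{\mathrm{Bell}}$ together with the $Z$-tails being diagonal on the computational basis to conclude the whole product state picks up an overall $\pm1$. Counting dimensions — there are $4^{I-1}$ choices of Bell states times $2^{\,2n - 2(I-1)}$ computational-basis states on the untouched qubits, matching the full Hilbert space dimension $4^n$ — confirms we have a complete common eigenbasis, hence the operators in $B_X$ indeed mutually commute and $B_X$ is diagonalized by exactly this family. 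The argument for $B_Y$ is verbatim identical with $X\mapsto Y$ and $r\mapsto k$.

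The main obstacle I anticipate is bookkeeping the signs and supports carefully in sub-step (ii): the JW Majorana strings $Z_1\cdots Z_{r-1}X_r$ overlap heavily, so when $P_{r_i}\otimes P_{r_{i+1}}$ acts on a product of several Bell pairs, a given qubit (say qubit $r_m$ of the first copy for some $m<i$) may receive a $Z$ from $P_{r_i}$; I need to confirm this $Z$ acts diagonally on the corresponding Bell state and contributes a well-defined sign, and that the indices never collide in a way that breaks the pairing. The cleanest route is probably to push the $Z$-tails through the Bell states first (using that $Z\otimes I$ and $I\otimes Z^{T}=I\otimes Z$ both stabilize or sign-flip a Bell state in a controlled way), reducing everything to the action of the ``head'' single-qubit Pauli $X_{r_i}$ (resp.\ $X_{r_{i+1}}$) on the single Bell pair it targets, where the eigenrelation is immediate. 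Once the sign is seen to be a product of independent $\pm1$ contributions, one per Bell pair and per computational-basis qubit, the eigenvalue statement and the commutativity both follow, and the completeness of the basis is the dimension count above. I would keep the write-up at the level of identifying these ingredients and the telescoping collapse, deferring the explicit sign evaluation to a short verification.
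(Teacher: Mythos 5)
Your overall route---direct verification that each listed product state is a joint eigenvector of the operators in $B_X$, plus a dimension count ($4^{I-1}\cdot 2^{2n-2(I-1)}=4^n$) for completeness---is sound, and is in fact more explicit than the paper's own argument, which only sets up an induction on $I$. However, two ingredients you lean on are wrong as stated, and one of them is exactly the point the lemma hinges on. First, the motivating ``telescoping'' claim is false: $(P_{r_i}\otimes P_{r_{i+1}})(P_{r_{i+1}}\otimes P_{r_{i+2}})=(P_{r_i}P_{r_{i+1}})\otimes(P_{r_{i+1}}P_{r_{i+2}})$, so the interior factor sits in \emph{different} tensor slots and does not cancel; the group generated by $B_X$ is not a group of doubled operators $Q\otimes Q$. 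This is only motivational, so it is harmless by itself, but it signals that the standard Bell-sampling eigenrelation cannot be imported wholesale. Second, and this is the genuine gap: your fallback for the $Z$-tails---that ``$Z\otimes I$ and $I\otimes Z$ both stabilize or sign-flip a Bell state in a controlled way''---is incorrect. A single-leg $Z$ is not diagonal in the Bell basis; it permutes Bell states ($\ket{\Phi^{\pm}}\leftrightarrow\ket{\Phi^{\mp}}$). So ``pushing the $Z$-tails through the Bell pairs'' as independent single-leg signs would not establish the eigenvector property, and you explicitly defer precisely the verification that rescues the argument.

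What actually makes the lemma true is an alignment property that you should state and check: because $r_1<r_2<\cdots<r_I$, Bell pair $m$ couples copy-1 qubit $r_m$ with copy-2 qubit $r_{m+1}$, while the operator $P_{r_i}\otimes P_{r_{i+1}}$ has its $X$-head at $r_i$ on copy 1 and at $r_{i+1}$ on copy 2, with $Z$'s only strictly below the heads. Hence on Bell pair $m$ the operator acts as $Z\otimes Z$ if $m<i$, as $X\otimes X$ if $m=i$, and as the identity if $m>i$---never as a Pauli on one leg only---while the unpaired qubits (copy-2 qubit $r_1$, copy-1 qubit $r_I$, and all discarded indices) receive only $Z$ or the identity, which is diagonal on $\ket{b}$. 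Since every Bell state is a $\pm1$ eigenstate of $X\otimes X$, $Y\otimes Y$, and $Z\otimes Z$, each factor contributes a definite sign and the eigenvector claim follows; your dimension count then gives completeness, and mutual commutativity follows directly from the pairwise anticommutation of the Majorana strings (note that your $\abs{i-j}=1$ case is misstated: the shared factor $P_{r_{i+1}}$ occupies different slots in the two operators, so the correct accounting is one anticommutation in each tensor factor, not a ``$+1$ from the shared factor''). With this alignment observation inserted, your proof closes; without it, the key step is unsupported.
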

\begin{proof}
    We prove the statement by induction on the cardinality of set $B_X$. For $I=3$, it is trivial. Suppose the statement holds up to $I\leq \mathfrak{I}$. We consider the case $I=\mathfrak{I}+1$. Using the hypothesis for the subset of $B_{X}$, which consists of the first $\mathfrak{I}$ elements of $B_{X}$, we know that 
    \begin{align}
        \ket{\text{Bell}}_{r_1,n+r_2}\ket{\text{Bell}}_{r_2,n+r_3}\cdots \ket{\text{Bell}}_{r_{\mathfrak{I}-1},n+r_\mathfrak{I}}\ket{b}
    \end{align}forms the common eigenstate basis for the observables in the subset of $B_X$. Here, in the computational basis state $\ket{b}$, we can define the bell states that act on qubit $r_{\mathfrak{I}}$ and qubit $r_{n+r_{\mathfrak{I}+1}}$. That is, the common eigenstate basis mentioned above can be changed into
    \begin{align}
        \ket{\text{Bell}}_{r_1,n+r_2}\ket{\text{Bell}}_{r_2,n+r_3}\cdots \ket{\text{Bell}}_{r_{\mathfrak{I}-1},n+r_\mathfrak{I}}\ket{\text{Bell}}_{r_{\mathfrak{I}},n+r_{\mathfrak{I}+1}}\ket{b}.
    \end{align}This eigenstate is still shared by the subset of $B_X$ mentioned above. Now, it suffices to check whether this state is an eigenstate for the last observable $P_{r_{\mathfrak{I}}}\otimes P_{r_{\mathfrak{I}+1}}$. In fact, this is trivial by noticing that 
    for the bell states $\{\ket{\text{Bell}}_{r_j,n+r_{j+1}}\}_{j=1}^{\mathfrak{I}-1}$, they are transformed by the Pauli operator $Z\otimes Z$ in  $P_{r_{\mathfrak{I}}}\otimes P_{r_{\mathfrak{I}+1}}$ and  the last state $\ket{\text{Bell}}_{r_{\mathfrak{I}},n+r_{\mathfrak{I}+1}}$ by the operator $X\otimes X$ in  $P_{r_{\mathfrak{I}}}\otimes P_{r_{\mathfrak{I}+1}}$. Therefore, this completes the proof.
       
\end{proof}
Now we prove \cref{lem: ours} using \cref{lem: bell-basis}. 
\begin{proof}
For a given state $\rho$, our goal is to estimate the Majorana observables in \eqref{eq: sx-sy}, that is, $\tr(\rho O)$ for each $O\in S_X\cup S_Y$. We perform single-copy measurements and two-copy measurements as follows,
\begin{enumerate}
    \item Estimate the magnitudes of expectations of observables in \eqref{eq: sx-sy} to precision $\epsilon$

    (Two-copy measurements with $\mathcal{O}(\frac{\log n}{\epsilon^4})$ sample complexity)
    \item Throw away observables having small magnitudes ($\leq \frac{3\epsilon}{4}$). 

    ($\mathcal{O}(n)$ classical time)

    \item Let us consider the remaining observables, denoted as $\{P_{r_i}\}_{i=1}^I$ and $\{Q_{k_j}\}_{j=1}^J$ in \eqref{eq: bx-by}. 
    
    \item Measure the first observables, $P_{r_1},Q_{k_1}$, to precision $\epsilon^2$, $\tr(\rho P_{r_1})$ and $\tr(\rho Q_{k_1})$. 
    
    (Single-copy measurements with $\mathcal{O}(\frac{1}{\epsilon^2})$ sample complexity)
    \item Measure the observables in \eqref{eq: bx-by} to precision $\epsilon^2$ using the measurement basis in \cref{lem: bell-basis}.

    (Two-copy measurements with $\mathcal{O}(\frac{
    \log \min\{n,\max\{I,J\}\}
    }{\epsilon^4})$ sample complexity, where $I,J\leq n$. Note that particularly when $n\ge\frac{1}{\epsilon^2}$, according to \cite[Lemma 2.1]{king2025triply}, $\max\{I,J\}=\mathcal{O}(\frac{1}{\epsilon^2})$.  )
\end{enumerate}
In total, our protocol incurs $\mathcal{O}(\frac{\log n}{\epsilon^4})$ sample complexity  and  $\mathcal{O}(n)$ classical time cost, and $\mathcal{O}(1)$ circuits.

From the data analysis point of view, we perform three independent experiments (i.e. step 1, step 4, step 5 in the above protocol). The first experiment gives us the estimates of magnitudes, $\{p_{r_i}^{(1)}\}$, $\{q_{k_j}^{(1)}\}$ with probability at least $1-\delta_1$
\begin{equation}\label{eq: stats 1}
    \abs{\tr(\rho P_{r_i})},\abs{\tr(\rho Q_{k_j})}\geq \frac{\epsilon}{2}, \abs{p_{r_i}^{(1)}-\abs{\tr(\rho P_{r_i})}}\leq \frac{\epsilon}{4},\abs{q_{k_j}^{(1)}-\abs{\tr(\rho Q_{k_j})}}\leq \frac{\epsilon}{4}, 
\end{equation}the second experiment produces the estimates of the first observables, $p_{r_1}^{(2)}$, $q_{k_1}^{(2)}$, with probability at least $1-\delta_2$
\begin{equation}\label{eq: stats 2}
    \abs{p_{r_1}^{(2)}-\tr(\rho P_{r_1})}\leq \frac{\epsilon}{4}, \abs{q_{k_1}^{(2)}-\tr(\rho Q_{k_1})}\leq \frac{\epsilon}{4},
\end{equation}and the last gives the estimates of the chained trace multiplications, $\{p_{r_i}^{(3)}\}$, $\{q_{k_j}^{(3)}\}$, with probability at least $1-\delta_3$
\begin{equation}\label{eq: stats 3}
    \abs{p_{r_i,r_{i+1}}^{(3)}-\tr(\rho P_{r_i})\tr(\rho P_{r_{i+1}})}\leq\frac{\epsilon^2}{8}, \abs{q_{k_j,k_{j+1}}^{(3)}-\tr(\rho Q_{k_j})\tr(\rho Q_{k_{j+1}})}\leq\frac{\epsilon^2}{8}. 
\end{equation}
Since the events in \eqref{eq: stats 1}, \eqref{eq: stats 2}, \eqref{eq: stats 3} happens independently. the above inequalities hold simultaneously with probability at least $(1-\delta_1)(1-\delta_2)(1-\delta_3)$.

We claim that from the inequalities, we can estimate the Majorana observables in \eqref{eq: sx-sy}. From \eqref{eq: stats 1}, we estimate the magnitudes of expectation values of observables, and consider only the observables of magnitudes no less than $\frac{3\epsilon}{4}$. From \eqref{eq: stats 2}, we estimate the signs of the first expectation values. With \eqref{eq: stats 1} and \eqref{eq: stats 3}, we know that $\{p_{r_i}^{(3)}\}$, $\{q_{k_j}^{(3)}\}$ have the same sign as the correspondig true values. For example, \eqref{eq: stats 1} implies that 
\begin{equation}
\abs{\tr(\rho P_{r_i})\tr(\rho P_{r_{i+1}})}\geq \frac{\epsilon^2}{4},
\end{equation}and from \eqref{eq: stats 3}, we see that
\begin{equation}
    \abs{p_{r_i,r_{i+1}}^{(3)}-\tr(\rho P_{r_i})\tr(\rho P_{r_{i+1}})}\leq\frac{\epsilon^2}{8}.
\end{equation}Therefore, the signs of the estimate  $p_{r_i,r_{i+1}}^{(3)}$ and the corresponding true value $\tr(\rho P_{r_i})\tr(\rho P_{r_{i+1}})$ are the same. It is the same for the other estimates in \eqref{eq: stats 3}. Together with \eqref{eq: stats 2} and \eqref{ineq: sign}, we estimate the signs of the estimates in \eqref{eq: stats 1} by tracking the sign of $p_{r_1}^{(2)}p_{r_1,r_2}^{(3)}$, that of $p_{r_1}^{(2)}p_{r_1,r_2}^{(3)}p_{r_2,r_3}^{(3)}$, and so forth. Therefore, we prove the claim and completes the proof of \cref{lem: ours}.

\end{proof}

\section*{Acknowledgements}
This research was supported by Quantum
Simulator Development Project for Materials Innovation through the
National Research Foundation of Korea (NRF) funded by the Korean
government (Ministry of Science and ICT(MSIT))(No. NRF-
2023M3K5A1094813). 
SC was supported by a KIAS Individual Grant (CG090601) at Korea Institute for Advanced Study. TK is supported by a KIAS Individual Grant (CG096001) at Korea Institute for Advanced Study. M.H. is supported by a KIAS
Individual Grant (No. CG091301) at Korea Institute for Advanced Study. We used resources of the Center for Advanced Computation at Korea Institute for Advanced Study and the National Energy Research Scientific Computing Center (NERSC), a U.S. Department of Energy Office of Science User Facility operated under Contract No.DE-AC02-05CH11231.

\section*{Code Availability}
Code used for the current study are available at the following GitHub repository: 

https://github.com/TKmath/FAST-for-dynamical-correlation-functions.

\section*{Data Availability}
All data that support the findings of this study are included within the Github repository in Code availability.

\section*{Declarations}

\subsection*{Author Contributions}
TK conceptualized the idea and performed the theoretical and numerical analyses. TK and SC wrote the main manuscript. TK, MH, HP, and SC contributed to finalizing the manuscript.

\subsection*{Competing Interests}
The authors declare no competing financial or non-financial interests.

\bibliographystyle{unsrt}
\bibliography{ref}

\section*{Figure legends}

Figure 1: Circuits for single-copy and two-copy measurements used in our protocols for estimating the dynamical correlation functions. (a) Circuit for single-copy measurements without ancilla qubit (b) Dynamic circuit for single-copy measurements without ancilla qubit (c) Circuit for single-copy measurements with an ancilla qubit (d) Dynamic circuit for single-copy measurements with an ancilla qubit
(e) Circuit for two-copy measurements without ancilla qubit
(f) Circuit for two-copy measurements with two ancilla qubits

\end{document}